\documentclass[11pt]{article}
\usepackage{indentfirst}
\usepackage[margin=1in]{geometry}
\usepackage{graphicx}
\usepackage{amsmath}
\usepackage{xcolor}
\usepackage{amssymb}
\usepackage{amsthm}
\usepackage{tikz}
\usetikzlibrary{positioning}

\newtheorem{theorem}{Theorem}
\newtheorem{corollary}{Corollary}
\newtheorem{proposition}{Proposition}
\newtheorem{remark}{Remark}
\newtheorem{example}{Example}

\title{Why Self-Supervised Encoders Want to Be Normal}
\author{Yuval Domb\\\texttt{yuval@moonmath.ai}}
\date{May 03, 2026}

\begin{document}

\maketitle

\begin{abstract}
Self-supervised learning has achieved remarkable empirical success in learning robust representations without explicit labels, most recently demonstrated within the framework of Joint-Embedding Predictive Architectures (JEPA).
However, a fundamental question remains: what analytical principles drive these encoders toward specific distributional states? In this paper, we demonstrate that the preference for normal distributions in self-supervised encoders is a direct consequence of the Information Bottleneck (IB) principle. By recasting the IB objective as a rate-distortion problem over the predictive manifold, we provide a theoretical basis for why optimal, target-neutral, latent representations should tend towards isotropic Gaussian states.

Under this framework, we show that latent representations correspond to soft clustering of inputs sharing similar predictive distributions, organized within a natural simplex structure. This perspective unifies a wide range of existing supervised and less-supervised objectives and provides a principled explanation for commonly used regularization schemes. Furthermore, we derive practical loss objectives that approximate this structure and demonstrate their effectiveness on standard benchmarks. Ultimately, our framework offers a geometric lens to understanding representation collapse and it establishes a mathematical system for regularization strategies to be used to ensure high-entropy, informative embeddings in modern self-supervised models.

\end{abstract}

%%%%%%%%%%%%%%%%%%%%%%%%%%%%%%%%%%%%%%%%%%%%%%%%%%%%%%%%%%%%

\section{Introduction}

Learning compact representations that preserve predictive information
is a central problem in machine learning.
Given observations $X$ and targets $Y$, the goal is to learn a latent
variable $W$ that compresses $X$ while retaining enough information to
predict $Y$.
The IB method \cite{tishby1999information}
formalized this tradeoff by minimizing the mutual information $I(X;W)$
subject to a constraint on the predictive information $I(W;Y)$.

This paper develops a geometric and information-theoretic
framework for encoder-decoder learning, built on the IB
principle.
Starting from the rate-distortion equivalence of Harrem\"oes and
Tishby \cite{harremoes2007information}, with Kullback-Leibler (KL) divergence as the
distortion measure, we characterize the optimal representation
geometrically as a soft clustering of the \textit{predictive manifold}
$\mathcal{M}=\{p(Y|x):x\in\mathcal{X}\}$%
\footnote{We use the term ``manifold'' loosely. $\mathcal{M}$ is the
image of $x\mapsto p(Y|x)$ and may have singularities,
self-intersections, or varying dimension.}.
A key outcome is a theoretical justification of the Sketched Isotropic Gaussian Regularization (SIGReg) method
\cite{balestriero2025lejepa} as a relaxation of the maximum entropy
principle on the predictive manifold, mediated by a chain of
transformations from the flat Dirichlet distribution to the isotropic
Gaussian.
SIGReg, together with the Conditional Entropy Bottleneck (CEB) decomposition of Fischer
\cite{fischer2020conditional}, yields concrete encoder losses for
supervised and semi-supervised settings. In the self-supervised
setting, SIGReg serves as the distributional regularizer alongside a
view-prediction proxy.

\begin{figure}[h]
\centering
\begin{tikzpicture}[>=stealth]

% ===== Panel (a): Optimal encoder =====
\begin{scope}[shift={(-0.5,0)}]
    \node[draw, circle, fill=red!10, inner sep=0pt, minimum size=8mm,
          font=\small] (Y) at (0,0) {$Y$};
    \node[draw, circle, fill=blue!10, inner sep=0pt, minimum size=8mm,
          font=\small] (X) at (1.5,0) {$X$};
    \node[draw, circle, fill=green!8, inner sep=0pt, minimum size=8mm,
          font=\small] (W) at (3.0,0) {$W$};

    \draw[thick] (Y) -- (X);
    \draw[thick, ->] (X) -- (W)
        node[midway, above, font=\scriptsize] {enc};

    \draw[<->, blue!60!black, semithick] (1.5,-0.7) -- (3.0,-0.7);
    \node[font=\scriptsize, blue!60!black] at (2.25,-1.0)
        {min $I(X;W)$};

    \draw[<->, red!60!black, semithick] (0,0.7) -- (3.0,0.7);
    \node[font=\scriptsize, red!60!black] at (1.5,1.0)
        {max $I(W;Y)$};

    % \node[font=\small\bfseries, align=center] at (1.5,-1.7)
    %     {(a) IB optimal\\encoding};
\end{scope}

% ===== Arrow (a)->(b) =====
\draw[->, very thick, gray!40] (3.6,0) -- (4.2,0);

% ===== Panel (b): Soft clustering =====
\begin{scope}[shift={(4.5,-1.3)}]
    \draw[thick] (0,0) -- (3.4,0) -- (1.7,2.94) -- cycle;

    % Cluster 1: compact, bottom-left
    \draw[blue!35, fill=blue!10] (0.7,0.45) ellipse (0.28 and 0.38);
    \foreach \x/\y in {0.62/0.25, 0.78/0.42, 0.65/0.58,
                       0.82/0.65, 0.58/0.45, 0.75/0.30}
        \fill[blue!50] (\x,\y) circle (1.5pt);
    \fill[black] (0.70,0.44) circle (2.5pt);

    % Cluster 2: elongated, top-center
    \draw[blue!35, fill=blue!10] (1.55,1.55) ellipse (0.45 and 0.25);
    \foreach \x/\y in {1.25/1.50, 1.45/1.62, 1.65/1.48,
                       1.80/1.58, 1.50/1.42, 1.38/1.58}
        \fill[blue!50] (\x,\y) circle (1.5pt);
    \fill[black] (1.53,1.53) circle (2.5pt);

    % Cluster 3: wide, bottom-right
    \draw[blue!35, fill=blue!10] (2.55,0.55) ellipse (0.42 and 0.3);
    \foreach \x/\y in {2.30/0.45, 2.50/0.65, 2.72/0.50,
                       2.85/0.62, 2.45/0.40, 2.68/0.70}
        \fill[blue!50] (\x,\y) circle (1.5pt);
    \fill[black] (2.55,0.54) circle (2.5pt);

    \node[font=\scriptsize] at (1.7,1.0)
        {$W\!\in\!\mathcal{P}_K$};
    % \node[font=\small\bfseries, align=center] at (1.7,-0.75)
    %     {(b) Soft clustering\\of $p(Y|X)$};
\end{scope}

% ===== Arrow (b)->(c) =====
\draw[->, very thick, gray!40] (8.4,0) -- (9.0,0);

% ===== Panel (c): Max entropy latent =====
\begin{scope}[shift={(9.3,-0.6)}]
    \draw[blue!15, fill=blue!3]  (1.5,0.5) circle (1.15);
    \draw[blue!25, fill=blue!5]  (1.5,0.5) circle (0.75);
    \draw[blue!40, fill=blue!8]  (1.5,0.5) circle (0.35);

    \foreach \x/\y in {1.50/0.50, 1.28/0.83, 1.72/0.17,
                       1.08/0.38, 1.92/0.62, 1.58/1.10,
                       1.42/-0.10, 0.70/0.50, 2.30/0.50,
                       1.50/1.25, 1.10/0.05, 1.90/0.95,
                       0.82/0.68, 2.18/0.32, 1.62/0.275,
                       1.38/0.725, 1.50/0.08, 1.05/0.92,
                       1.85/0.62, 1.72/0.50, 1.18/0.50,
                       1.40/0.98, 1.68/0.77, 1.55/0.575}
        \fill[blue!60!black] (\x,\y) circle (0.9pt);

    \node[font=\scriptsize] at (1.5,2.0)
        {$W\!\sim\!\mathcal{N}(0,I)$};
    % \node[font=\small\bfseries, align=center] at (1.5,-0.95)
    %     {(c) Max entropy $W$\\for less supervision};
\end{scope}

\end{tikzpicture}
\caption{Overview from  left to right.
(a)~The IB optimal encoder.
(b)~Soft clustering in the probability simplex.
(c)~Max entropy for less supervised settings.}
\label{fig:overview}
\end{figure}
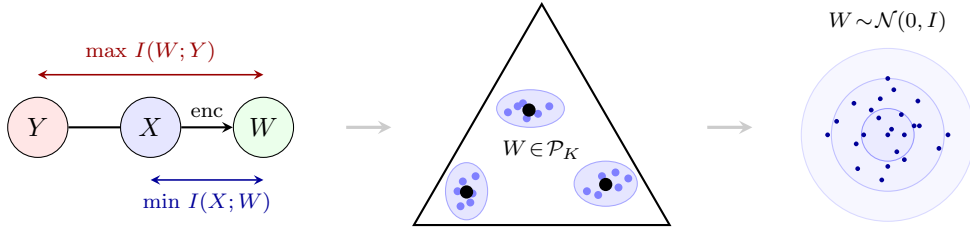

The framework applies to both discrete and continuous variables.
For discrete $Y$ with $K$ outcomes, the predictive manifold lives in
the probability simplex $\mathcal{P}_K$.
For continuous $Y$, we introduce the \textit{effective predictive
dimension} and show that the manifold has finite effective dimension
bounded by the intrinsic dimension of $X$.

\subsection{Contributions}

\begin{enumerate}
    \item \textbf{Geometric framework for optimal IB representations.}
    Building on the rate-distortion equivalence of
    Harrem\"oes and Tishby \cite{harremoes2007information} and the
    tutorial treatment of Goldfeld and Polyanskiy
    \cite{goldfeld2020information},     we assemble known ingredients,
    including the optimal encoder form, the canonical simplex
    representation, invariance under prediction-preserving
    transformations, and the dimensionality bound
    $\dim(W^*)\leq K{-}1$ (see Corollary~\ref{thm:dim}),
    into a unified geometric picture: the optimal encoder at any distortion level performs a
    soft clustering of the predictive manifold
    $\mathcal{M}\subset\mathcal{P}_K$, with a linear decoder in the
    canonical parameterization
    (see Section~\ref{sec:geometric}).
    We extend the dimensionality statement to continuous $Y$ via
    covering numbers (see Proposition~\ref{prop:geom_dim}).
    The individual results are not new. The contribution is the
    synthesis, which makes contributions~2 and~3 below concrete.

    \item \textbf{Dirichlet-to-Gaussian chain and characterization of
    SIGReg.}
    We derive a chain of exact transformations,
    $\mathrm{Dir}(1,\ldots,1)\to K$ i.i.d.\
    $\mathrm{Exp}(1/2)\to 2K$ i.i.d.\ $\mathcal{N}(0,1)$, that
    connects the maximum entropy prior on $\mathcal{P}_K$ to an
    isotropic Gaussian in Euclidean space, with quantified entropy
    overhead at each step
    (see Section~\ref{sec:simplex_to_euclidean}).
    This identifies SIGReg \cite{balestriero2025lejepa} as a
    Gaussian relaxation of the flat Dirichlet via the Cram\'er-Wold
    theorem, with a non-diminishing phase entropy overhead of
    $K\log 2\pi$ nats that affects rate accounting but not
    achievable prediction
    (see Section~\ref{sec:sig_connection}).

    \item \textbf{Non-parametric encoder losses across supervision
    regimes.}
    Using the CEB decomposition of Fischer
    \cite{fischer2020conditional}, which rewrites the IB Lagrangian as
    conditional rate $I(X;W|Y)$ plus total rate $I(X;W)$
    (eq.~\ref{eq:lagrangian_prime}), we derive encoder losses for
    the supervised and semi-supervised settings, estimated via
    minibatch marginals rather than variational bounds
    (Sections~\ref{sec:supervised}--\ref{sec:semi_supervised}).
    In the self-supervised setting, where no labels are available,
    the CEB conditional rate is replaced by a view-prediction proxy
    and serves as the distributional
    regularizer (see Section~\ref{sec:self_supervised}).
\end{enumerate}

\subsection{Prior Work}

\medskip
\noindent\textbf{The Information Bottleneck.}
The optimization problem underlying the IB method was first
formulated by Witsenhausen and Wyner
\cite{witsenhausen1975conditional}, who studied
$\inf_{p(w|x):\,H(X|W)\geq\alpha}H(Y|W)$, in the context of common
information \cite{gacs1973common} for discrete random
variables.
Tishby, Pereira, and Bialek \cite{tishby1999information}
independently introduced the same problem as a principled framework
for extracting relevant information, coining the term ``information
bottleneck.''
The optimal encoder satisfies self-consistent equations solvable by
the Blahut-Arimoto algorithm \cite{blahut1972computation,
arimoto1972algorithm}, originally developed for rate-distortion
computation.
The connection between IB and rate-distortion theory was made explicit
by Harrem\"oes and Tishby \cite{harremoes2007information}, who showed
that the IB is a rate-distortion function with KL~divergence as
distortion. We build on this equivalence and develop its geometric
consequences.

\medskip
\noindent\textbf{IB and deep learning.}
Goldfeld and Polyanskiy \cite{goldfeld2020information} provide a
comprehensive tutorial on IB theory and its applications to machine
learning, covering the discrete-variable formulation in detail.
Shwartz-Ziv and Tishby \cite{shwartzziv2017opening} studied the
information plane dynamics of deep networks during training.
Saxe et al.\ \cite{saxe2018information} showed that the reported
compression phase does not hold in general.
Achille and Soatto \cite{achille2018emergence} develop an IB-flavoured
account of representation learning in which invariance to nuisances
emerges from a variational IB objective. Our treatment is
complementary: non-variational, and with an explicit geometric
characterization of the optimum on the probability simplex.
Alemi et al.\ \cite{alemi2017deep} introduced VIB, using variational
bounds (a fixed prior to upper-bound $I(X;W)$ and a learned
decoder to lower-bound $I(W;Y)$) to make the IB objective tractable
for neural networks.
Fischer \cite{fischer2020conditional} introduced CEB, which rewrites
the IB Lagrangian as
$I(X;W|Y)-\gamma\,I(W;Y)$ via the chain rule of mutual information
(equivalent to our eq.~\eqref{eq:lagrangian_prime} with
$\gamma=\beta-1$), and demonstrated benefits for calibration and
adversarial robustness.
Fischer used variational bounds in practice. We estimate both CEB
terms via minibatch marginals rather than variational surrogates.
Subsequent work extended the variational IB approach to nonlinear
settings \cite{kolchinsky2019nonlinear} and connections to PAC-Bayes
generalization bounds \cite{xu2017information}.
Strouse and Schwab \cite{strouse2017deterministic} studied the
deterministic IB, which replaces the rate $I(X;W)$ with the marginal
entropy $H(W)$ and thereby admits deterministic encoders by
construction. The resulting frontier is in the $(H(W),I(W;Y))$ plane
rather than on the standard IB plane, i.e. $(I(X;W),I(W;Y))$.

\medskip
\noindent\textbf{Sufficient statistics and minimal representations.}
The connection between IB and sufficient statistics is classical:
the minimal sufficient statistic of $X$ for $Y$ achieves the endpoint
of the rate-distortion function at zero distortion
\cite{tishby1999information,cover2006elements}.
The concept of sufficiency originates with Fisher
\cite{fisher1922mathematical} and was formalized by the
Lehmann-Scheff\'e theorem \cite{lehmann1950completeness}.
Our Theorem~\ref{thm:mss} consolidates these classical facts with a
self-contained proof and restates the result in the geometric
language of the predictive manifold, serving as the foundation for
the soft-clustering interpretation at $\varepsilon>0$.

\medskip
\noindent\textbf{Geometric and information-geometric perspectives.}
Information geometry \cite{amari2016information} studies the
Riemannian structure of statistical manifolds equipped with the
Fisher metric.
Chechik et al.\ \cite{chechik2005information} analyzed the Gaussian
IB and its geometric properties.
Our geometric interpretation of the predictive manifold
$\mathcal{M}\subset\mathcal{P}_K$ and its soft clustering is
complementary: rather than studying the Fisher geometry of the model
family, we study the geometry of the set of predictive distributions
induced by the data.
The manifold hypothesis \cite{fefferman2016testing,pope2021intrinsic}
and effective-dimension constructions \cite{abbas2020effective}
provide context for our dimensionality bound. Our effective predictive
dimension (see Appendix~\ref{app:eff_dim}) is defined via covering
numbers and connects the intrinsic dimension of the input to the
complexity of $\mathcal{M}$.

\medskip
\noindent\textbf{Rate-distortion theory.}
The rate-distortion function, introduced by Shannon
\cite{shannon1959coding} and developed by Berger
\cite{berger1971rate}, characterizes the fundamental limits of lossy
compression.
Our use of KL~divergence as distortion connects to the extensive
literature on mismatched and logarithmic distortion measures
\cite{cover2006elements}.

\medskip
\noindent\textbf{Self-supervised learning and SIGReg.}
Self-supervised learning methods learn representations without
human labels, using pretext tasks or contrastive objectives
\cite{chen2020simple,grill2020bootstrap,caron2021emerging}.
Shwartz-Ziv and LeCun \cite{shwartzziv2024compress} review
self-supervised learning through an information-theoretic lens,
introducing a unified framework for the self-supervised
information-theoretic learning problem.
The Joint-Embedding Predictive Architecture (JEPA)
\cite{lecun2022path} proposes learning by predicting in
representation space.
LeJEPA \cite{balestriero2025lejepa} implements this paradigm with SIGReg, which
enforces an isotropic Gaussian prior on the encoder output via the
Cram\'er-Wold theorem and the Epps-Pulley test \cite{epps1983test}.
Our framework provides a theoretical foundation for this choice: SIGReg
implements a Gaussian relaxation of the maximum entropy principle on
the predictive manifold via the Dirichlet-to-Gaussian chain, with a
quantified entropy overhead that affects rate but not prediction.

\medskip
\noindent\textbf{Maximum entropy and the Dirichlet distribution.}
The maximum entropy principle \cite{jaynes1957information} selects the
least informative distribution consistent with known constraints.
On the probability simplex $\mathcal{P}_K$, the maximum entropy
distribution with respect to the Lebesgue measure is the flat Dirichlet
$\mathrm{Dir}(1,\ldots,1)$ \cite{cover2006elements}.
The Dirichlet family is widely used as a prior in Bayesian inference
\cite{ferguson1973bayesian} and in variational autoencoders on the
simplex \cite{joo2020dirichlet}. Our use is different: the flat
Dirichlet is not a learned prior but a fixed reference encoding
complete uncertainty about the predictive structure.

\medskip
\noindent\textbf{Semi-supervised learning.}
Semi-supervised methods leverage unlabeled data to improve learning
when labels are scarce
\cite{grandvalet2004semi,kingma2014semi,vanengelen2020survey}.
Kingma et al.\ \cite{kingma2014semi} extend variational autoencoders
to the semi-supervised setting. Grandvalet and Bengio
\cite{grandvalet2004semi} use entropy minimization on unlabeled data.
Our semi-supervised loss pairs the CEB conditional rate on labeled
data with SIGReg on all data, replacing task-specific assumptions
with a distributional regularizer grounded in the maximum entropy
principle.

%%%%%%%%%%%%%%%%%%%%%%%%%%%%%%%%%%%%%%%%%%%%%%%%%%%%%%%%%%%%

\section{Problem Statement}

Given a random variable pair $(X,Y)\sim p(x,y)$, the goal is to
approximate $p(y|x)$ via a learned encoder-decoder network, using
i.i.d.\ samples from $(X,Y)$.

\subsection{Preliminaries}
\label{sec:prelim}

$X$ and $Y$ take values in measurable spaces $\mathcal{X}$ and
$\mathcal{Y}$, respectively, and may be discrete or continuous.
We write sums throughout. Continuous variables are handled by replacing
sums with integrals and Shannon entropy with differential entropy,
under standard regularity conditions.

We use standard information-theoretic notation
\cite{cover2006elements}: $H(X)$ denotes the entropy of $X$,
$H(Y|X)$ the conditional entropy,
$I(X;Y)=H(Y)-H(Y|X)$ the mutual information,
$I(X;Y|W)$ the conditional mutual information given~$W$, and
$D_{\mathrm{KL}}(p\|q)=\sum_x p(x)\log\frac{p(x)}{q(x)}$ the
KL divergence between distributions $p$ and $q$.
All logarithms are natural unless stated otherwise.

We denote by
\begin{align}
    \mathcal{P}_K := \left\{ \mathbf{p} \in \mathbb{R}^K : p_i \geq 0,\;
    \sum_{i=1}^K p_i = 1 \right\}
    \label{eq:simplex}
\end{align}
the $(K{-}1)$-dimensional probability simplex.
Here $K$ is the \textit{effective predictive dimension}: the dimension
of the smallest simplex containing the predictive manifold
$\mathcal{M}:=\{p(Y|x):x\in\mathcal{X}\}$.
\begin{itemize}
    \item When $Y$ is discrete then
    $K=|\mathcal{Y}|$, i.e. $p(Y|x)$ is a $K$-dimensional
    probability vector.
    \item When $Y$ is continuous, the space of densities is
    infinite-dimensional, but the predictive manifold $\mathcal{M}$
    has finite effective dimension, and we set
    $K=\lceil d_{\mathrm{eff}}(\mathcal{M}) \rceil + 1$,
    bounded by $d_X$, the intrinsic dimension of $X$, under Lipschitz
    regularity of $f^*\colon x\mapsto p(Y|x)$ (see Appendix~\ref{app:eff_dim}).
    $\mathcal{P}_K$ is then the $(K{-}1)$-simplex spanned by $K$
    densities whose convex hull contains $\mathcal{M}$
    (equivalently, a sufficient $K$-bin quantization of $\mathcal{Y}$
    adapted to $\mathcal{M}$), and all subsequent geometric
    statements refer to this chart.
\end{itemize}
In both cases, the optimal representation lives in $\mathcal{P}_K$
and all subsequent results are stated in terms of~$K$.

Following the information bottleneck framework
\cite{tishby1999information}, we partition the problem into an
encoder-decoder pair: the encoder maps $X\rightarrow W$ and the decoder
maps $W\rightarrow \hat{Y}$, where $\hat{Y}$ is sampled from $p(y|w)$
and $W$ is the latent variable.
The system obeys the Markov chain
\begin{align}
    Y - X - W - \hat{Y},
    \label{eq:markov}
\end{align}
which implies $p(w|x,y)=p(w|x)$: the encoder has no direct access
to $Y$.

\subsection{Optimization Problem}

By the chain rule for mutual information,
$I(X,W;Y)=I(X;Y)+I(W;Y|X)=I(W;Y)+I(X;Y|W)$.
Since $I(W;Y|X)=0$ by \eqref{eq:markov}, we obtain
\begin{align}
    I(X;Y) - I(W;Y) = I(X;Y|W).
    \label{eq:identity}
\end{align}
where $I(X;Y|W)$ is the residual predictive information between
$X$ and $Y$ not captured by~$W$.

The goal is to learn an encoder that minimizes the encoding rate
$R=I(X;W)$ subject to an upper bound on this residual:
\begin{align}
    \min_{p(w|x)} \; I(X;W) \quad \text{s.t.} \quad
    I(X;Y|W) \leq \varepsilon.
    \label{eq:opt}
\end{align}
By the data processing inequality \cite{cover2006elements} applied to
\eqref{eq:markov}, $I(W;Y)\leq I(X;Y)$, so $I(X;Y|W)\geq 0$ and the
constraint is always satisfiable.
The parameter $0\leq \varepsilon\leq I(X;Y)$ controls the compression-prediction
tradeoff:
\begin{itemize}
    \item $\varepsilon=0$ enforces exact sufficiency, i.e.\
    $I(W;Y)=I(X;Y)$, recovering the minimal sufficient statistic of
    $X$ with respect to $Y$
    \cite{fisher1922mathematical,lehmann1950completeness}.
    \item $\varepsilon>0$ allows approximate sufficiency, permitting
    greater compression at the cost of some predictive information loss.
\end{itemize}

\subsection{Lagrangian Formulation}
\label{sec:lagrangian}

Introducing $\beta\geq 0$ for the constraint
\eqref{eq:opt} and $\lambda(x)$ for the normalization of $p(w|x)$,
the explicit Lagrangian is
\begin{align}
    \mathcal{L}(p(w|x),\beta,\lambda(x))
    := I(X;W) + \beta\bigl(I(X;Y|W) - \varepsilon\bigr)
    + \sum_x \lambda(x)\left(\sum_w p(w|x) - 1\right).
    \label{eq:lagrangian}
\end{align}
Let $R(\varepsilon)$ denote the optimal value of \eqref{eq:opt} as a
function of~$\varepsilon$:
\begin{align}
    R(\varepsilon) := \inf_{p(w|x):\, I(X;Y|W)\leq\varepsilon} I(X;W).
    \label{eq:rd_function}
\end{align}
For finite discrete alphabets, the infimum is attained and ``inf"
may be replaced by ``min". We use ``min" throughout the main text
with this implicit assumption.
The Lagrange multiplier satisfies $\beta=-R'(\varepsilon)$, as shown in
Remark~\ref{rem:beta_slope} below.
As we shall show, $\beta$ parameterizes this curve monotonically, so
specifying $\beta$ is equivalent to specifying $\varepsilon$ at all
smooth points.
In practice, $\beta$ is therefore treated as a hyperparameter that
selects the desired compression-prediction tradeoff.

Setting the functional derivative of \eqref{eq:lagrangian} with
respect to $p(w|x)$ to zero yields the optimal encoder (Theorem 4 of
\cite{tishby1999information}): it assigns higher probability to latent
values $w$ whose predictive distribution $p(y|w)$ is close to $p(y|x)$
in KL divergence.
The explicit form is derived in the next section.

\begin{remark}[Lagrange multiplier as slope]
\label{rem:beta_slope}
From \eqref{eq:lagrangian}, $\partial\mathcal{L}/\partial\varepsilon=-\beta$.
At the optimal encoder the constraint is binding, so $\mathcal{L}=I(X;W)=R(\varepsilon)$.
At smooth points, differentiating with respect to $\varepsilon$ gives
$R'(\varepsilon)=-\beta$ \cite[Ch.~10]{cover2006elements}.
The multiplier is the negative slope of the rate-distortion function:
larger $\beta$ imposes a steeper rate penalty and drives $\varepsilon$ toward zero.
\end{remark}

%%%%%%%%%%%%%%%%%%%%%%%%%%%%%%%%%%%%%%%%%%%%%%%%%%%%%%%%%%%%

\section{IB as Rate Distortion}

\subsection{Rate-Distortion Equivalence}

The optimization \eqref{eq:opt} is equivalent to a classical
rate-distortion problem \cite{shannon1959coding,berger1971rate} with a
task-dependent distortion measure \cite{harremoes2007information}.
Define the distortion between input $x$ and latent value $w$ as
\begin{align}
    d(x,w) := D_{\mathrm{KL}}\bigl(p(y|x)\,\|\,p(y|w)\bigr).
    \label{eq:distortion}
\end{align}
This measures predictive mismatch: $d(x,w)=0$ if and only if
$p(y|w)=p(y|x)$.

\begin{proposition}[Rate-distortion equivalence]
\label{prop:rd}
Under the Markov chain \eqref{eq:markov},
\begin{align}
    \mathbb{E}[d(X,W)] = H(Y|W) - H(Y|X) = I(X;Y|W).
    \label{eq:rd_identity}
\end{align}
Therefore the optimization \eqref{eq:opt} is equivalent to
\begin{align}
    \min_{p(w|x)} \; I(X;W) \quad \text{s.t.} \quad
    \mathbb{E}[d(X,W)] \leq \varepsilon,
    \label{eq:rd}
\end{align}
and the Lagrangian \eqref{eq:lagrangian} takes the classical
rate-distortion form \cite{berger1971rate}
\begin{align}
    \mathcal{L} = I(X;W) + \beta\,\mathbb{E}[d(X,W)].
    \label{eq:rd_lagrangian}
\end{align}
\end{proposition}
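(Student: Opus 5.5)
The plan is to compute $\mathbb{E}[d(X,W)]$ directly from the definition \eqref{eq:distortion} and then read off the two consequences. First, use the Markov chain \eqref{eq:markov} to pin down the decoder distribution. Since $p(w|x,y)=p(w|x)$, we have $p(y|x,w)=p(y|x)$, and
\begin{align*}
p(y|w)=\sum_x p(x|w)\,p(y|x),
\end{align*}
where $p(x|w)=p(x)p(w|x)/p(w)$. Next, expand the expectation over the joint $p(x,w)=p(x)p(w|x)$:
\begin{align*}
\mathbb{E}[d(X,W)] = \sum_{x,w} p(x,w)\sum_y p(y|x)\log p(y|x) \;-\; \sum_{x,w} p(x,w)\sum_y p(y|x)\log p(y|w).
\end{align*}

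Handle the two terms separately. In the first term nothing depends on $w$, so summing out $w$ leaves $\sum_x p(x)\sum_y p(y|x)\log p(y|x)=-H(Y|X)$. For the second term, write $p(x,w)p(y|x)=p(x,w)p(y|x,w)=p(x,y,w)$; this is the one place the Markov property enters. Summing out $x$ then gives $-\sum_{y,w}p(y,w)\log p(y|w)=H(Y|W)$. Hence $\mathbb{E}[d]=H(Y|W)-H(Y|X)$.

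The second equality follows from
\begin{align*}
H(Y|W)-H(Y|X)=\bigl(H(Y)-H(Y|X)\bigr)-\bigl(H(Y)-H(Y|W)\bigr)=I(X;Y)-I(W;Y),
\end{align*}
which equals $I(X;Y|W)$ by \eqref{eq:identity}.

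The consequences are then immediate. The constraint functionals in \eqref{eq:opt} and \eqref{eq:rd} coincide as functions of $p(w|x)$, and the objective is the same, so the feasible sets and optimal values agree. Substituting into \eqref{eq:lagrangian} gives $I(X;W)+\beta\,\mathbb{E}[d]$ plus the normalization term and the constant $-\beta\varepsilon$. Neither of these affects the minimizer over normalized encoders, which yields \eqref{eq:rd_lagrangian}.

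The main obstacle is not algebraic but a matter of care. The step $p(y|x,w)=p(y|x)$ must be justified explicitly from the Markov chain, and the continuous case needs finiteness conditions. Differential entropies may be infinite, so I would phrase that case via the KL form $\mathbb{E}[d]=I(X;Y|W)$ directly, writing $I(X;Y|W)=\mathbb{E}\bigl[\log \frac{p(y|x,w)}{p(y|w)}\bigr]$ and using the same substitution, invoking the standard regularity assumptions of Section~\ref{sec:prelim}. I would also note that $d(x,w)$ is finite whenever $p(w|x)>0$, because then $p(y|w)\ge p(x|w)\,p(y|x)$ with $p(x|w)>0$ gives absolute continuity of $p(y|x)$ with respect to $p(y|w)$.
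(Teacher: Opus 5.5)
Your proposal is correct and follows the same route as the paper's proof: split the logarithm, read off the first term as $-H(Y|X)$, and use the Markov property $p(x,w)p(y|x)=p(x,y,w)$ so the second term collapses to $H(Y|W)$ after summing out $x$. You also spell out two steps the paper leaves implicit: the identification $H(Y|W)-H(Y|X)=I(X;Y)-I(W;Y)=I(X;Y|W)$ via \eqref{eq:identity}, and the fact that \eqref{eq:lagrangian} reduces to \eqref{eq:rd_lagrangian} only up to the constant $-\beta\varepsilon$ and the normalization term.
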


\begin{proof}
Expanding the expectation and splitting the logarithm,
\begin{align}
    \mathbb{E}[d(X,W)]
    &= \sum_{x,w} p(x,w) \sum_y p(y|x)\log\frac{p(y|x)}{p(y|w)} \\
    &= \sum_{x,y} p(x)p(y|x)\log p(y|x)
    - \sum_{x,w,y} p(x,w)p(y|x)\log p(y|w).
\end{align}
The first term equals $-H(Y|X)$.
For the second, the Markov chain \eqref{eq:markov} gives
$p(x,w)p(y|x)=p(x,w,y)$, so summing over $x$,
\begin{align}
    \sum_{x,w,y} p(x,w,y)\log p(y|w)
    = \sum_{w,y} p(w,y)\log p(y|w) = -H(Y|W).
\end{align}
Hence $\mathbb{E}[d(X,W)] = H(Y|W) - H(Y|X) = I(X;Y|W)$.
\end{proof}

The distortion \eqref{eq:distortion} measures predictive mismatch
rather than reconstruction error: two inputs $x,x'$ can be merged
without distortion if and only if $p(y|x)=p(y|x')$.
By marginalization,
\begin{align}
    p(y|w) = \mathbb{E}[p(y|X)\,|\,W=w] = \sum_{x} p(x|w)\,p(y|x),
    \label{eq:decoder_exact}
\end{align}
so the decoder sees the average predictive distribution of inputs
mapped to $w$.

\subsection{The General Solution}
\label{sec:general_solution}

Setting the functional derivative of \eqref{eq:rd_lagrangian} with
respect to $p(w|x)$ to zero yields the optimal encoder
\begin{align}
    p(w|x) = \frac{p(w)}{Z(x,\beta)}
    \exp\!\left(-\beta\,D_{\mathrm{KL}}
    \bigl(p(y|x)\,\|\,p(y|w)\bigr)\right),
    \label{eq:opt_encoder2}
\end{align}
where $Z(x,\beta)=\sum_{w'}p(w')\exp(-\beta
D_{\mathrm{KL}}(p(y|x)\|p(y|w')))$ is a normalization function and
$p(y|w)$ is the self-consistent decoder \eqref{eq:decoder_exact}.
Each input $x$ is assigned probabilistically to latent values $w$,
with higher probability given to $w$ whose predictive distribution
$p(y|w)$ is close to $p(y|x)$ in KL divergence.

The encoder \eqref{eq:opt_encoder2}, the marginal $p(w)$, and the
decoder $p(y|w)$ satisfy the self-consistency equations
\begin{align}
    p(w) &= \sum_x p(x)\,p(w|x),
    \label{eq:marginal}\\
    p(y|w) &= \frac{1}{p(w)}\sum_x p(x)\,p(w|x)\,p(y|x).
    \label{eq:decoder}
\end{align}
These can be solved iteratively via the Blahut-Arimoto algorithm
\cite{blahut1972computation,arimoto1972algorithm}: starting from an
initial $p(w)$ and $p(y|w)$, the algorithm alternates the encoder
update \eqref{eq:opt_encoder2}, the marginal update \eqref{eq:marginal},
and the decoder update \eqref{eq:decoder}.
Convergence to the optimum of \eqref{eq:rd_lagrangian} for fixed $\beta$
is guaranteed.

\begin{remark}
\label{rem:stochastic}
For $\varepsilon\in(0,I(X;Y))$, the optimal encoder achieving
the rate-distortion function \eqref{eq:rd_function} is generally stochastic
\cite{strouse2017deterministic}.
A deterministic encoder assigns each $x$ to a fixed point in
$\mathcal{P}_K$, restricting $W$ to a finite subset of the predictive
manifold $\mathcal{M}$ and typically preventing it from tracing the full
rate-distortion curve.
At the endpoint $\varepsilon=0$, the optimal encoder is deterministic,
as we shall show in Theorem~\ref{thm:mss} in the next section.
\end{remark}

\subsection{Minimal Sufficient Statistic: The Zero-Distortion Case}

At $\varepsilon=0$, the distortion constraint becomes exact sufficiency, i.e.
$I(X;Y|W)=0$.

\begin{theorem}[Minimal sufficient statistic]
\label{thm:mss}
Define $f^*:\mathcal{X}\to\mathcal{P}_K$ by $f^*(x):=p(Y|x)$, and let
$\{C_i\}$ denote the partition of $\mathcal{X}$ into level sets of $f^*$:
\begin{align}
    x \sim x' \quad \Longleftrightarrow \quad f^*(x)=f^*(x'),
    \label{eq:equiv}
\end{align}
where $C_i=\{x:f^*(x)=q_i\}$ for some $q_i\in\mathcal{P}_K$.
Define the random variable
\begin{align}
    W^* := f^*(X),
    \label{eq:mss}
\end{align}
so that $W^*$ takes values in $\mathcal{P}_K$. Then:
\begin{enumerate}
    \item $W^*$ is a sufficient statistic of $X$ for $Y$:
    $I(X;Y|W^*)=0$.
    \item $W^*$ is minimal: any encoder $W$ with $I(X;Y|W)=0$
    satisfies $W^*=g(W)$ almost surely, where $g(w):=p(Y|w)$ is the
    self-consistent decoder \eqref{eq:decoder}.
\end{enumerate}
\end{theorem}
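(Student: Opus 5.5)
Both parts will be handled with Proposition~\ref{prop:rd}, which identifies $I(X;Y|W)$ with the expected distortion $\mathbb{E}[d(X,W)]$. Since KL divergence is nonnegative and vanishes only when its arguments coincide, zero expected distortion forces $p(y|w)=p(y|x)$ for almost every pair $(x,w)$ under $p(x,w)$. So the plan is to reduce both sufficiency and minimality to statements about almost-sure equality of predictive distributions.

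\textbf{Part 1 (sufficiency).} I will compute the decoder induced by $W^*=f^*(X)$ using \eqref{eq:decoder_exact}. For $w=q_i$, the posterior $p(x|W^*=q_i)$ is supported on the level set $C_i$. On $C_i$ every $x$ has $p(y|x)=q_i$. Averaging therefore gives $p(y|W^*=q_i)=\sum_{x\in C_i}p(x|q_i)\,q_i(y)=q_i(y)$, that is, $p(Y|W^*)=W^*$. Hence $d(x,f^*(x))=D_{\mathrm{KL}}(f^*(x)\|f^*(x))=0$ for every $x$, and Proposition~\ref{prop:rd} gives $I(X;Y|W^*)=\mathbb{E}[d(X,W^*)]=0$. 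The Markov chain \eqref{eq:markov} holds for $W^*$ because it is a deterministic function of $X$.

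\textbf{Part 2 (minimality).} Let $W$ be any encoder with $I(X;Y|W)=0$. By Proposition~\ref{prop:rd}, $\sum_{x,w}p(x,w)\,D_{\mathrm{KL}}(p(y|x)\|p(y|w))=0$. Every summand is nonnegative, so each term with $p(x,w)>0$ has zero KL divergence, which gives $p(Y|x)=p(Y|w)$. Equivalently, $f^*(x)=g(w)$ for $p(x,w)$-almost every pair. Evaluating along the joint realisation $(X,W)$ then yields $W^*=f^*(X)=g(W)$ almost surely, which is the claim.

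\textbf{Main obstacle.} The only delicate step is the passage from ``expected KL is zero'' to ``KL is zero almost surely'' in the continuous case. There, sums become integrals, and conditional distributions are defined only up to null sets. I would handle this by working with regular conditional probabilities and using the fact that a nonnegative measurable function with zero integral vanishes almost everywhere. The strict positivity of KL off the diagonal, which follows from Gibbs' inequality, then converts the integral statement into pointwise equality of $p(Y|x)$ and $p(Y|w)$ as elements of $\mathcal{P}_K$. For finite alphabets, which the paper adopts implicitly, this step reduces to the elementary termwise argument above.
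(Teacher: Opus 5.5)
Your proposal is correct and follows essentially the same route as the paper. Part~2 is the paper's argument: zero expected KL forces $p(Y|w)=p(Y|x)$ on every pair with $p(x,w)>0$, hence $W^*=g(W)$ almost surely. Part~1 differs only cosmetically: you conclude from $p(Y|W^*)=W^*$ via zero pointwise distortion and Proposition~\ref{prop:rd}, whereas the paper notes $H(Y|W^*)=H(Y|X)$ directly, which is the same identity.
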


\begin{proof}
\textit{(1) Sufficiency.}
Since $W^*=p(Y|X)$, conditioning on $W^*=t$ fixes $p(Y|x)=t$ for
every $x$ in the preimage $\{x:p(Y|x)=t\}$, so $p(Y|W^*=t)=t$.
Hence $H(Y|W^*)=H(Y|X)$, giving $I(X;Y|W^*)=0$.

\textit{(2) Minimality.}
Suppose $W$ satisfies $I(X;Y|W)=0$, equivalently
$\mathbb{E}[d(X,W)]=0$ by Proposition~\ref{prop:rd}.
Since $d(x,w)\geq 0$ and its expectation is zero, $d(x,w)=0$ almost
surely, so $p(y|w)=p(y|x)$ for every $(x,w)$ with $p(x,w)>0$.
The decoder $g(w)=p(Y|w)$ therefore satisfies $g(w)=p(Y|x)$
for every $x$ mapped to $w$, giving $W^*=g(W)$ almost surely.
\end{proof}

\begin{corollary}[Minimum rate]
\label{cor:min_rate}
$W^*$ achieves the minimum rate among all sufficient encoders:
$I(X;W^*)=H(W^*)$, and any $W$ with $I(X;Y|W)=0$ satisfies
$I(X;W)\geq H(W^*)$, with equality if and only if $I(X;W|W^*)=0$.
\end{corollary}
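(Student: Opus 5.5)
We plan to obtain both parts of the corollary from Theorem~\ref{thm:mss} using the chain rule for mutual information, arguing in the discrete setting of the main text. The continuous case follows with the usual substitutions.

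\textit{First claim.} Since $W^*=f^*(X)$ is a deterministic function of $X$, we have $H(W^*|X)=0$. Therefore
\begin{align}
    I(X;W^*) = H(W^*) - H(W^*|X) = H(W^*).
\end{align}
This is immediate, and with Theorem~\ref{thm:mss}(1) it shows that $W^*$ is itself a sufficient encoder attaining rate $H(W^*)$.

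\textit{Lower bound.} Let $W$ be any encoder with $I(X;Y|W)=0$. By Theorem~\ref{thm:mss}(2), $W^*=g(W)$ almost surely, so $H(W^*|W)=0$. We then expand $I(X;W,W^*)$ by the chain rule in two orders:
\begin{align}
    I(X;W,W^*) = I(X;W) + I(X;W^*|W) = I(X;W^*) + I(X;W|W^*).
\end{align}
The term $I(X;W^*|W)$ is bounded above by $H(W^*|W)=0$, so it vanishes. Substituting $I(X;W^*)=H(W^*)$ from the first claim gives the identity
\begin{align}
    I(X;W) = H(W^*) + I(X;W|W^*).
\end{align}
Nonnegativity of conditional mutual information then yields $I(X;W)\geq H(W^*)$. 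Equality holds exactly when $I(X;W|W^*)=0$, which is the stated condition.

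\textit{Main obstacle.} The delicate step is the ``almost surely'' in Theorem~\ref{thm:mss}(2). We need $W^*=g(W)$ jointly on the support of $p(x,w)$, not merely marginally, so that $H(W^*|W)=0$ holds for the joint law of $(X,W,W^*)$. Concretely, we will check that on every pair $(x,w)$ with $p(x,w)>0$ we have $f^*(x)=g(w)$. This is exactly what the proof of Theorem~\ref{thm:mss} establishes, so $W^*$ is a measurable function of $W$ on a set of full joint measure.

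In the continuous case, $H(W^*)$ must be read as a discrete entropy when $W^*$ takes countably many values. When $W^*$ is genuinely continuous, both sides are infinite and the statement holds trivially, or it should be interpreted through the mutual-information identity above. We will note this caveat rather than belabor it.
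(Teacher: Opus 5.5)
Your proof is correct and follows essentially the same route as the paper: $H(W^*|X)=0$ gives $I(X;W^*)=H(W^*)$, and the identity $I(X;W)=H(W^*)+I(X;W|W^*)$, which holds because $W^*=g(W)$ by Theorem~\ref{thm:mss}, gives both the bound and the equality condition. The only difference is that the paper gets the inequality from the data processing inequality on $X\to W\to W^*$ and states the chain-rule identity without justification, whereas you derive that identity by showing $I(X;W^*|W)\le H(W^*|W)=0$.
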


\begin{proof}
By Theorem~\ref{thm:mss}, $W^*=g(W)$, so the data processing
inequality applied to $X\to W\to W^*$ gives
$I(X;W)\geq I(X;W^*)$.
Since $W^*=f^*(X)$ is deterministic, $H(W^*|X)=0$, hence
$I(X;W^*)=H(W^*)$.
Equality $I(X;W)=H(W^*)$ holds if and only if $I(X;W|W^*)=0$,
by the chain rule $I(X;W)=H(W^*)+I(X;W|W^*)$.
\end{proof}

\begin{corollary}[Hard assignment in the large-$\beta$ limit]
\label{cor:hard}
Let $W$ correspond to the partition $\{C_i\}$ of Theorem~\ref{thm:mss},
with each cell $C_i$ associated with a distinct latent value $w_i$
satisfying $p(y|w_i)=q_i(y)$, and let $x_i\in C_i$ be any
representative.
Then
\begin{align}
    D_{\mathrm{KL}}\bigl(p(y|x_i)\,\|\,p(y|w_j)\bigr) =
    \begin{cases} 0 & i=j \\ >0 & i\neq j, \end{cases}
    \label{eq:kl_cases}
\end{align}
and the optimal encoder \eqref{eq:opt_encoder2} satisfies
\begin{align}
    \lim_{\beta\to\infty} p(w_j|x_i) =
    \begin{cases} 1 & i=j \\ 0 & i\neq j, \end{cases}
\end{align}
i.e.\ the encoder performs hard assignment to the correct partition cell.
\end{corollary}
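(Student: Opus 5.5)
The argument has two parts: first the KL case analysis \eqref{eq:kl_cases}, then the $\beta\to\infty$ limit obtained from the explicit Gibbs form of the optimal encoder \eqref{eq:opt_encoder2}.

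\textbf{The KL cases.} By construction $p(y|w_j)=q_j(y)$. Since $x_i\in C_i$, we also have $p(y|x_i)=f^*(x_i)=q_i$. Hence $D_{\mathrm{KL}}(p(y|x_i)\,\|\,p(y|w_j))=D_{\mathrm{KL}}(q_i\|q_j)$, which vanishes when $i=j$. For $i\neq j$, the cells are distinct level sets of $f^*$, so $q_i\neq q_j$. Gibbs' inequality says the KL divergence is zero only for equal distributions, so $D_{\mathrm{KL}}(q_i\|q_j)>0$. If $q_i$ is not absolutely continuous with respect to $q_j$, the divergence is $+\infty$; this still counts as $>0$, and the exponential weight is then $0$ for every $\beta>0$.

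\textbf{The limit.} Write $d_{ij}:=D_{\mathrm{KL}}(q_i\|q_j)$ and substitute into \eqref{eq:opt_encoder2}. Then
\begin{align*}
p(w_j|x_i)=\frac{p(w_j)e^{-\beta d_{ij}}}{\sum_k p(w_k)e^{-\beta d_{ik}}}
=\frac{p(w_j)e^{-\beta d_{ij}}}{p(w_i)+\sum_{k\neq i}p(w_k)e^{-\beta d_{ik}}}.
\end{align*}
Each term with $k\neq i$ tends to $0$, because $d_{ik}>0$. If the latent alphabet is countable with $\sum_k p(w_k)=1$, dominated convergence justifies passing the limit through the sum. Provided $p(w_i)>0$, the denominator tends to $p(w_i)$. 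The numerator tends to $p(w_i)$ when $j=i$ and to $0$ when $j\neq i$. This gives the stated limits $1$ and $0$.

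\textbf{Main obstacle.} The delicate point is that $p(w)$ is not fixed. It comes from the self-consistency equation \eqref{eq:marginal} and may itself depend on $\beta$. A degenerate marginal with $p(w_i)\to0$ could in principle compete with the decaying off-diagonal terms. I would handle this by working in the setting the corollary prescribes, namely latent values tied to the partition. In that setting, \eqref{eq:marginal} gives $p(w_i)\to p(X\in C_i)>0$ along the large-$\beta$ branch, which is consistent with the hard assignment, and the cells with positive mass are the ones that matter. So I would verify that the hard-assignment encoder is a fixed point of \eqref{eq:opt_encoder2}--\eqref{eq:decoder} in the limit, and then bound $p(w_i)$ below uniformly in $\beta$. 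The ratio of the $k$-th term to the $i$-th term is at most $p(w_k)e^{-\beta\min_{k\neq i}d_{ik}}/p(w_i)\to0$, which closes the argument. If the minimum over $k\neq i$ is not attained, as can happen with infinitely many cells, I would restrict to finite alphabets, matching the paper's standing assumption.
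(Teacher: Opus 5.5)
Your proposal is correct and follows the paper's proof: the same case split ($q_i\neq q_j$ for distinct level sets, so the divergence is strictly positive), followed by the observation that the off-diagonal weights $e^{-\beta d_{ik}}$ in \eqref{eq:opt_encoder2} vanish. You are also more careful than the paper in stating the requirement $p(w_i)>0$, which the paper leaves implicit by treating $p(w)$ as fixed.

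One caution on your ``main obstacle'' paragraph. Deriving $p(w_i)\to p(X\in C_i)$ from the hard-assignment limit is circular, because that limit is what you are proving. Checking that hard assignment is a fixed point in the limit also cannot rule out degenerate self-consistent solutions with $p(w_i)=0$. To close this, use optimality directly: comparing with the hard-assignment encoder gives $\mathbb{E}[d(X,W)]\le H(W^*)/\beta$, hence $p(X\in C_i)\,\min_{k\neq i}d_{ik}\,\bigl(1-p_\beta(w_i|x_i)\bigr)\le H(W^*)/\beta\to 0$, with no lower bound on $p(w_i)$ required.
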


\begin{proof}
Since $p(y|w_i)=q_i(y)=p(y|x_i)$ by construction,
$D_{\mathrm{KL}}(p(y|x_i)\|p(y|w_i))=0$.
For $i\neq j$, $p(y|w_j)=q_j(y)\neq q_i(y)=p(y|x_i)$ by definition
of the partition, so $D_{\mathrm{KL}}(p(y|x_i)\|p(y|w_j))>0$.
As $\beta\to\infty$, $\exp(-\beta D_{\mathrm{KL}})\to 0$ for all
$j\neq i$ in \eqref{eq:opt_encoder2}, so the distribution concentrates
on $w_i$.
\end{proof}

\begin{corollary}[Dimensionality bound]
\label{thm:dim}
The intrinsic dimension of the optimal representation $W^*$ is at
most $K-1$.
\end{corollary}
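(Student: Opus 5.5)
The plan is to read the bound directly off Theorem~\ref{thm:mss}. By \eqref{eq:mss}, $W^*=f^*(X)=p(Y|X)$, so $W^*$ takes values in the predictive manifold $\mathcal{M}=f^*(\mathcal{X})$. By the definition of $K$ in Section~\ref{sec:prelim}, $\mathcal{M}\subseteq\mathcal{P}_K$. The simplex \eqref{eq:simplex} is the intersection of the nonnegative orthant with the affine hyperplane $\{\mathbf{p}:\sum_i p_i=1\}$, so it is a $(K{-}1)$-dimensional affine object. Any reasonable notion of intrinsic dimension is monotone under inclusion. This holds for the covering-number dimension of Appendix~\ref{app:eff_dim}, the Hausdorff dimension, and the topological dimension for subsets of a Euclidean space. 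The support of $W^*$ therefore has dimension at most $\dim\mathcal{P}_K=K-1$.

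The key steps, in order, are as follows.

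\emph{Affine chart.} First I fix the chart $\phi(\mathbf{p})=(p_1,\ldots,p_{K-1})$. This map is a bi-Lipschitz affine bijection from $\mathcal{P}_K$ onto the corner simplex in $\mathbb{R}^{K-1}$, with inverse $p_K=1-\sum_{i<K}p_i$.

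\emph{Transfer to $W^*$.} Next I note that $\phi(W^*)$ is a random vector in $\mathbb{R}^{K-1}$. Its support is contained in a subset of $\mathbb{R}^{K-1}$, so its box-counting dimension is at most $K-1$. This is because $N(\delta)$ for the corner simplex scales as $O(\delta^{-(K-1)})$, and bi-Lipschitz maps preserve this dimension.

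\emph{Minimality.} Finally I invoke part~(2) of Theorem~\ref{thm:mss}: any sufficient encoder $W$ has $W^*=g(W)$ almost surely. The representation that matters, the minimal one, therefore never needs more than $K-1$ coordinates. An encoder with higher nominal dimension carries the extra coordinates only as the redundancy $I(X;W|W^*)$ of Corollary~\ref{cor:min_rate}.

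For continuous $Y$, the same argument applies in the chart described in Section~\ref{sec:prelim}. There $\mathcal{P}_K$ is the simplex spanned by $K$ densities whose convex hull contains $\mathcal{M}$. Each $p(Y|x)$ is then a convex combination of these $K$ densities, and its barycentric coordinates lie in a $(K{-}1)$-simplex.

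The main obstacle is in this continuous case. The barycentric coordinates must be well defined, which requires the $K$ spanning densities to be affinely independent so that the coordinate map is injective. They can be taken affinely independent by discarding redundant vertices, which only lowers $K$. The coordinates must also vary measurably with $x$. The discrete case reduces to the monotonicity of dimension under inclusion.
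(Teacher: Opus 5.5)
For discrete $Y$ your argument is the paper's: $W^*=p(Y|X)\in\mathcal{P}_K$ and $\dim\mathcal{P}_K=K-1$. Your affine chart and the bi-Lipschitz transfer of box-counting dimension just spell out the step the paper calls ``immediate''. The minimality step is not needed for the bound.

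For continuous $Y$ you depart from the paper, and your argument has a gap. You read off barycentric coordinates in a simplex spanned by $K=\lceil d_{\mathrm{eff}}(\mathcal{M})\rceil+1$ densities whose convex hull contains $\mathcal{M}$. Section~\ref{sec:prelim} does assert such a chart, but it cannot be assumed. In the paper's own scalar Gaussian channel example $K=2$. Yet distinct translates $\mathcal{N}(x,\sigma_Z^2)$ are linearly independent functions, since their Fourier transforms are $e^{-i\omega x}e^{-\sigma_Z^2\omega^2/2}$. The convex hull of $n$ densities lies in a linear span of dimension at most $n$, so it contains at most $n$ points of $\mathcal{M}$. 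Hence no segment, and indeed no finite simplex, contains $\mathcal{M}$.

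The obstacle you flag is therefore misplaced. Affinely dependent vertices only shrink the hull, which is harmless for an upper bound. What fails is the existence of the spanning simplex itself.

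The paper's continuous-case reasoning goes through the covering-number definition of $K$ instead. The intrinsic dimension of $W^*$ is the covering-number dimension of its range $(\mathcal{M},\rho)$, so $d_{\mathrm{eff}}(\mathcal{M})\le\lceil d_{\mathrm{eff}}(\mathcal{M})\rceil=K-1$ holds by construction. Proposition~\ref{prop:geom_dim} then ties $K$ to $d_X$ when $f^*$ is Lipschitz. Replace your continuous-case paragraph with this argument, measuring dimension directly in $\rho$ rather than in a chart that need not exist.
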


\begin{proof}
By \eqref{eq:mss}, $W^*=p(Y|X)$ takes values in $\mathcal{P}_K$, which has dimension $K-1$.
For discrete $Y$ with $|\mathcal{Y}|=K$, each $p(Y|x)$ is a
$K$-dimensional probability vector, so the bound is immediate.
For continuous $Y$, $W^*=p(Y|X)$ can always be losslessly
reparameterized as $W=X$, so the intrinsic dimension of the optimal
representation is at most $d_X$.
When $f^*$ is Lipschitz, Appendix~\ref{app:eff_dim} gives the sharper
covering-number statement $d_{\mathrm{eff}}(\mathcal{M})\leq d_X$ via
Proposition~\ref{prop:geom_dim}, and hence
$K-1=\lceil d_{\mathrm{eff}}(\mathcal{M})\rceil\leq\lceil d_X\rceil$.
\end{proof}

\begin{remark}
The bound $\dim(W^*)\leq K-1$ is a statement about coordinate
dimension, distinct from the entropy $H(W^*)$, which measures the
diversity of predictive distributions and can be large even when $K$
is small.
In the infinite-data limit, any encoder-decoder network trained for
prediction will collapse onto a manifold of dimension at most $K-1$,
regardless of the ambient dimension of the latent space.
\end{remark}

\subsection{Geometric Interpretation}
\label{sec:geometric}

The optimal encoder at $\varepsilon=0$ computes $W^*=p(Y|X)$, mapping
inputs onto the predictive manifold
$\mathcal{M}\subset\mathcal{P}_K$.
All information in $X$ relevant for predicting $Y$ lives on $\mathcal{M}$.
Nuisance variability is discarded.

For any $\varepsilon\geq 0$, the optimal encoder \eqref{eq:opt_encoder2}
depends on $x$ only through $p(Y|x)=W^*\in\mathcal{P}_K$.
The encoder therefore factors as
\begin{align}
    X \xrightarrow{f^*} W^* \in \mathcal{P}_K
    \xrightarrow{\text{stochastic kernel } p(w|w^*)} W.
\end{align}
The stochastic kernel $p(w|w^*)$ defines a soft clustering of $\mathcal{M}$.
The decoder
\begin{align}
    p(y|w) = \mathbb{E}[W^*\,|\,W=w] \in \mathcal{P}_K
    \label{eq:decoder_center}
\end{align}
is the center of the cluster associated with $w$.
The decoder law $p(y|w)$ is the only property of $w$ relevant for
predicting $Y$, so $W$ takes values in $\mathcal{P}_K$ without loss
of generality.
Any two values $w\neq w'$ with $p(y|w)=p(y|w')$ carry identical
prediction. Merging them leaves the rate and distortion unchanged.

The encoder-decoder system therefore takes the canonical form
\begin{align}
    X \xrightarrow{\text{nonlinear encoder}} W\in\mathcal{P}_K
    \xrightarrow{\text{linear decoder}} \hat{Y}
    \label{eq:canonical}
\end{align}
for all $\varepsilon\geq 0$, concentrating all representational
complexity in the encoder.

At $\varepsilon=0$, the stochastic kernel collapses to a point mass:
$W=W^*$ and the encoder computes $p(Y|X)$ exactly.
As $\varepsilon$ increases, the clusters broaden and overlap, merging
nearby predictive distributions at the cost of distortion.
At $\varepsilon=I(X;Y)$, all inputs merge into a single cluster,
giving $W\perp X$ (see Figure~\ref{fig:soft_clustering}).
In the canonical gauge $W^*=p(Y|X)$,
the optimal decoder reduces to indexing into the latent vector,
$p(y|w)=e_y^\top w$. For $\varepsilon>0$ the latent need not equal
the predictive distribution. The decoder is then the composition of an
encoder-determined map $w\mapsto\mathbb{E}[W^*\mid W=w]$ with the same
linear readout.

\begin{remark}[Gauge freedom]
\label{rem:gauge_decoder}
Decoder linearity is a property of the canonical representation
$W=p(Y|X)\in\mathcal{P}_K$.
The IB objective is invariant under any invertible reparameterization
$r\colon W\mapsto r(W)$: if $W$ is sufficient for $Y$, so is $r(W)$.
A loss depending only on mutual information terms does not pin the
gauge. The encoder may converge to any informationally equivalent
representation, and the decoder is then the generally nonlinear map
$e_y^\top\circ r^{-1}$.
The geometric picture of $\mathcal{M}\subset\mathcal{P}_K$ therefore
describes an \emph{equivalence class}: any optimal representation is
related to the canonical form by an invertible map.
\end{remark}

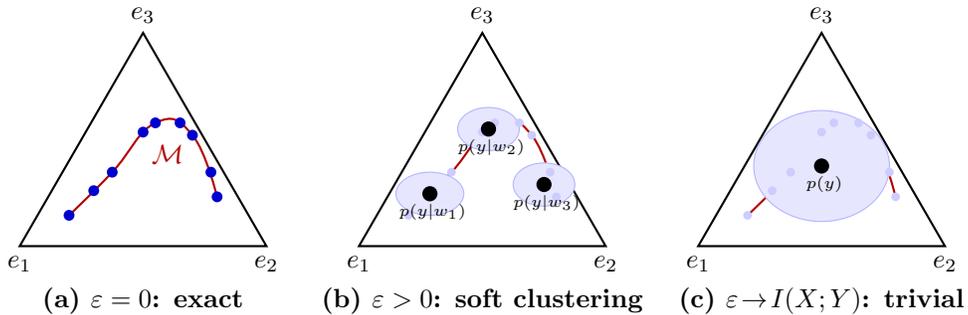
\begin{figure}[h]
\centering
\begin{tikzpicture}[scale=0.82]
% Panel (a): epsilon = 0
\begin{scope}[shift={(0,0)}]
    \draw[thick] (0,0) -- (4,0) -- (2,3.46) -- cycle;
    \node[below] at (0,0) {\small$e_1$};
    \node[below] at (4,0) {\small$e_2$};
    \node[above] at (2,3.46) {\small$e_3$};

    \draw[thick, red!70!black] plot[smooth, tension=0.8]
        coordinates {(0.8,0.5) (1.5,1.2) (2.2,2.0) (2.8,1.8) (3.2,0.8)};
    \node[red!70!black, font=\small] at (2.4,1.5) {$\mathcal{M}$};

    \foreach \x/\y in {0.8/0.5, 1.2/0.9, 1.5/1.2, 2.0/1.85,
                        2.2/2.0, 2.6/2.0, 2.8/1.8, 3.1/1.2, 3.2/0.8}
        \fill[blue!80!black] (\x,\y) circle (2.5pt);

    \node[below, font=\small\bfseries] at (2,-0.5)
        {(a) $\varepsilon=0$: exact};
\end{scope}

% Panel (b): small epsilon
\begin{scope}[shift={(5.5,0)}]
    \draw[thick] (0,0) -- (4,0) -- (2,3.46) -- cycle;
    \node[below] at (0,0) {\small$e_1$};
    \node[below] at (4,0) {\small$e_2$};
    \node[above] at (2,3.46) {\small$e_3$};

    \draw[thick, red!70!black] plot[smooth, tension=0.8]
        coordinates {(0.8,0.5) (1.5,1.2) (2.2,2.0) (2.8,1.8) (3.2,0.8)};

    \draw[blue!30, fill=blue!10] (1.15,0.85) ellipse (0.55 and 0.35);
    \draw[blue!30, fill=blue!10] (2.1,1.9) ellipse (0.5 and 0.35);
    \draw[blue!30, fill=blue!10] (3.0,1.0) ellipse (0.5 and 0.35);

    \foreach \x/\y in {0.8/0.5, 1.2/0.9, 1.5/1.2, 2.0/1.85,
                        2.2/2.0, 2.6/2.0, 2.8/1.8, 3.1/1.2, 3.2/0.8}
        \fill[blue!20] (\x,\y) circle (2pt);

    \fill[black] (1.15,0.85) circle (3.5pt)
        node[below, font=\tiny] {$\;p(y|w_1)$};
    \fill[black] (2.1,1.9) circle (3.5pt)
        node[below, font=\tiny] {$\;p(y|w_2)$};
    \fill[black] (3.0,1.0) circle (3.5pt)
        node[below, font=\tiny] {$\;p(y|w_3)$};

    \node[below, font=\small\bfseries] at (2,-0.5)
        {(b) $\varepsilon>0$: soft clustering};
\end{scope}

% Panel (c): large epsilon
\begin{scope}[shift={(11,0)}]
    \draw[thick] (0,0) -- (4,0) -- (2,3.46) -- cycle;
    \node[below] at (0,0) {\small$e_1$};
    \node[below] at (4,0) {\small$e_2$};
    \node[above] at (2,3.46) {\small$e_3$};

    \draw[thick, red!70!black] plot[smooth, tension=0.8]
        coordinates {(0.8,0.5) (1.5,1.2) (2.2,2.0) (2.8,1.8) (3.2,0.8)};

    \draw[blue!30, fill=blue!10] (2.0,1.3) ellipse (1.1 and 0.9);

    \foreach \x/\y in {0.8/0.5, 1.2/0.9, 1.5/1.2, 2.0/1.85,
                        2.2/2.0, 2.6/2.0, 2.8/1.8, 3.1/1.2, 3.2/0.8}
        \fill[blue!20] (\x,\y) circle (2pt);

    \fill[black] (2.0,1.3) circle (3.5pt)
        node[below, font=\tiny] {$\;p(y)$};

    \node[below, font=\small\bfseries] at (2,-0.5)
        {(c) $\varepsilon\!\to\!I(X;Y)$: trivial};
\end{scope}
\end{tikzpicture}
\caption{Soft clustering of the predictive manifold
$\mathcal{M}\subset\mathcal{P}_3$ at three distortion levels.
Each triangle is the probability simplex $\mathcal{P}_3$ with
vertices $e_1,e_2,e_3$.
The red curve is $\mathcal{M}=\{p(Y|x):x\in\mathcal{X}\}$.
Blue dots are individual predictive distributions $p(Y|x_i)$.
\textbf{(a)}~At $\varepsilon=0$, the encoder computes $W^*=p(Y|X)$
exactly (see Theorem~\ref{thm:mss}).
\textbf{(b)}~At moderate $\varepsilon$, nearby points merge into
soft clusters. Black dots are cluster centers
$p(y|w)=\mathbb{E}[W^*|W{=}w]$ \eqref{eq:decoder_center}.
\textbf{(c)}~As $\varepsilon\to I(X;Y)$, all points merge into a
single cluster centered at the marginal $p(y)$, and $W\perp X$.}
\label{fig:soft_clustering}
\end{figure}

\subsection{Invariances of the Optimal Representation}

The optimal encoder discards all variations in $X$ that leave
$p(Y|x)$ unchanged, at every distortion level.

\begin{proposition}[Invariance]
\label{prop:invariance}
Let $T:\mathcal{X}\to\mathcal{X}$ satisfy
\begin{align}
    p(Y|T(x)) = p(Y|x) \quad \forall\, x.
    \label{eq:invariance_cond}
\end{align}
Then $p(w|T(x))=p(w|x)$ for all $w$ and all $\varepsilon\geq 0$.
In particular, $W^*(T(x))=W^*(x)$ at $\varepsilon=0$.
\end{proposition}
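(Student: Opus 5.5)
The argument rests on one observation: in the optimal encoder \eqref{eq:opt_encoder2}, the input $x$ enters only through $p(Y|x)$. The other ingredients, namely the marginal $p(w)$, the decoder $p(y|w)$, and the normalizer $Z(\cdot,\beta)$, are global objects of the self-consistent solution and do not depend on the particular input being encoded. I will therefore fix $\beta$, equivalently fix $\varepsilon$ through the correspondence in Remark~\ref{rem:beta_slope}, and fix a solution $(p(w),p(y|w))$ of the self-consistency equations \eqref{eq:marginal}--\eqref{eq:decoder}. The goal is to show that the encoder evaluated at $T(x)$ and at $x$ produce the same output.

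\textbf{Encoder equality.} First, write the encoder as a functional $\Phi_\beta$ of the predictive distribution alone:
\begin{align}
p(w|x)=\Phi_\beta\bigl(p(Y|x)\bigr)(w):=\frac{p(w)\exp\!\bigl(-\beta D_{\mathrm{KL}}(p(Y|x)\,\|\,p(y|w))\bigr)}{\sum_{w'}p(w')\exp\!\bigl(-\beta D_{\mathrm{KL}}(p(Y|x)\,\|\,p(y|w'))\bigr)}.
\end{align}
The normalizer $Z(x,\beta)$ is also a function of $p(Y|x)$ only, since it is built from the same KL terms. Substituting \eqref{eq:invariance_cond} then gives
\[
p(w|T(x))=\Phi_\beta\bigl(p(Y|T(x))\bigr)(w)=\Phi_\beta\bigl(p(Y|x)\bigr)(w)=p(w|x)
\]
for every $w$. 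This is precisely the factorization $X\to W^*\to W$ from Section~\ref{sec:geometric}.

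\textbf{The case $\varepsilon=0$.} Here $W^*(T(x))=f^*(T(x))=p(Y|T(x))=p(Y|x)=W^*(x)$, which follows directly from \eqref{eq:mss} and needs no limiting argument. Alternatively, one can obtain it as the $\beta\to\infty$ limit via Corollary~\ref{cor:hard}.

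\textbf{Main obstacle: input-independence of the global quantities.} The step needing care is justifying that $p(w)$ and $p(y|w)$ genuinely do not depend on the input at which the encoder is evaluated. They are defined through sums over all $x$ that involve the encoder itself, so the self-consistent system is coupled. I would argue this as follows: the optimum is characterized as a fixed point of the Blahut--Arimoto map, and $p(w)$, $p(y|w)$ are fixed objects at that fixed point. The statement is pointwise in $x$ for a given optimal solution; it makes no claim about re-solving the problem after $T$ is applied.

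\textbf{Non-uniqueness of the optimum.} If the optimum is not unique, I would state the result for any stationary solution of the self-consistency equations. Alternatively, one can note that symmetrizing the encoder, by replacing $p(w|T(x))$ with $p(w|x)$, leaves the rate and distortion no worse. This second route uses the fact that the distortion $d(x,w)$ depends on $x$ only through $p(Y|x)$.
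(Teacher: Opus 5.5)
Your argument is correct and is essentially the paper's proof. The optimal encoder \eqref{eq:opt_encoder2}, including its normalizer, and the zero-distortion statistic $W^*=f^*(X)$ both depend on $x$ only through $p(Y|x)$, so substituting \eqref{eq:invariance_cond} gives the claim; your discussion of holding $p(w)$ and $p(y|w)$ fixed at a given solution only makes explicit what the paper leaves implicit. The symmetrization aside, however, is not valid as stated: replacing $p(w|T(x))$ by $p(w|x)$ changes the marginal of $W$ and can raise the rate while lowering the distortion, so it is not ``no worse'' in both. It is also not needed.
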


\begin{proof}
The optimal encoder depends on $x$ only through $p(y|x)$: at
$\varepsilon=0$ via $W^*=p(Y|X)$, and for $\varepsilon>0$ via the
self-consistency equation \eqref{eq:opt_encoder2}.
Since $p(Y|T(x))=p(Y|x)$, the claim follows in both cases.
\end{proof}

\begin{remark}
Transformations satisfying \eqref{eq:invariance_cond} are nuisance
variations: changes in $X$ that carry no information about $Y$.
When they form a group $G$ acting on $\mathcal{X}$, the canonical
latent space $\mathcal{M}\subset\mathcal{P}_K$ is the image of the
quotient $\mathcal{X}/G$ under $f^*$, and the encoder collapses
$G$-orbits at every distortion level.
This is the input-side counterpart of gauge freedom
(see Remark~\ref{rem:gauge_decoder}): gauge freedom concerns invertible
reparameterizations of the output $W$. The present invariance concerns
the collapsing of input $X$.
\end{remark}

\begin{example}[Ternary Classification on the Simplex]
\label{ex:ternary_example}

Let $X$ be uniform on $\{1,2,3,4,5,6\}$ and $Y\in\{1,2,3\}$ with
\begin{align}
    p(Y|X{=}1) = p(Y|X{=}2) &= (0.8,\;0.1,\;0.1), \notag\\
    p(Y|X{=}3) = p(Y|X{=}4) &= (0.1,\;0.8,\;0.1), \notag\\
    p(Y|X{=}5) = p(Y|X{=}6) &= (0.1,\;0.1,\;0.8).
    \label{eq:ternary_ex}
\end{align}
The minimal sufficient statistic $W^*=p(Y|X)$ takes three values
$w_1=(0.8,0.1,0.1)$, $w_2=(0.1,0.8,0.1)$, $w_3=(0.1,0.1,0.8)$,
each with probability $1/3$.
The predictive manifold $\mathcal{M}=\{w_1,w_2,w_3\}$ is this finite
set of three points, each lying near a vertex of $\mathcal{P}_3$
(see Figure~\ref{fig:ternary}).
The marginal $p(Y)=(1/3,1/3,1/3)$ is the centroid of that triangle.
At $\varepsilon=0$, the encoder maps each $x$ to its cluster
center $w_i$.
As $\varepsilon$ increases, the clusters broaden and the representation
coarsens toward the single point $p(Y)$.

\begin{figure}[h]
\centering
\begin{tikzpicture}[scale=1.1]
    \coordinate (e1) at (0,0);
    \coordinate (e2) at (4,0);
    \coordinate (e3) at (2,3.46);
    \draw[thick] (e1) -- (e2) -- (e3) -- cycle;
    \node[below left] at (e1) {$e_1=(1,0,0)$};
    \node[below right] at (e2) {$e_2=(0,1,0)$};
    \node[above] at (e3) {$e_3=(0,0,1)$};

    % w = (p1,p2,p3) -> p1*e1 + p2*e2 + p3*e3
    % w1=(0.8,0.1,0.1): 0.8(0,0)+0.1(4,0)+0.1(2,3.46) = (0.6,0.346)
    \coordinate (w1) at (0.6, 0.346);
    % w2=(0.1,0.8,0.1): 0.1(0,0)+0.8(4,0)+0.1(2,3.46) = (3.4,0.346)
    \coordinate (w2) at (3.4, 0.346);
    % w3=(0.1,0.1,0.8): 0.1(0,0)+0.1(4,0)+0.8(2,3.46) = (2.0,2.768)
    \coordinate (w3) at (2.0, 2.768);

    \draw[thick, red!70!black, dashed] (w1) -- (w2) -- (w3) -- cycle;

    \fill[blue!80!black] (w1) circle (3pt)
        node[below left, font=\small] {$w_1$};
    \fill[blue!80!black] (w2) circle (3pt)
        node[below right, font=\small] {$w_2$};
    \fill[blue!80!black] (w3) circle (3pt)
        node[above, font=\small] {$w_3$};

    % Marginal p(Y) = (1/3,1/3,1/3) -> center (2, 1.153)
    \fill[black!60] (2, 1.153) circle (2.5pt)
        node[below, font=\small] {$p(Y)$};

    % M is the three blue points; dashed triangle is their convex hull (visual aid)
\end{tikzpicture}
\caption{Example \eqref{eq:ternary_ex} inside $\mathcal{P}_3$.
Blue points: $\mathcal{M}=\{w_1,w_2,w_3\}$, the predictive manifold.
Dashed triangle: convex hull of $\mathcal{M}$ (visual aid).
Grey dot: marginal $p(Y)=(1/3,1/3,1/3)$.}
\label{fig:ternary}
\end{figure}
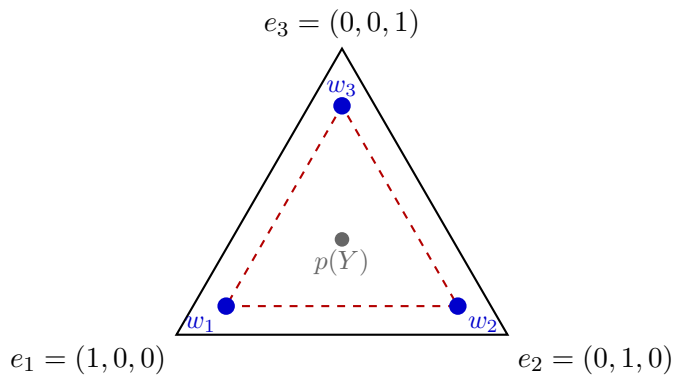

\end{example}

%%%%%%%%%%%%%%%%%%%%%%%%%%%%%%%%%%%%%%%%%%%%%%%%%%%%%%%%%%%%

\section{The Information Plane}

\subsection{The Achievable Region}
\label{sec:achievable}

The \textit{achievable region} is the set of all pairs $(I(X;W),I(W;Y))$
attainable by some encoder $p(w|x)$ satisfying the Markov chain
\eqref{eq:markov}, for a fixed joint distribution $p(x,y)$.

\medskip
\noindent\textbf{Upper bound (converse).}
The data processing inequality \cite{cover2006elements} applied to
the two ends of $Y-X-W$ gives two bounds:
\begin{align}
    I(W;Y) &\leq I(X;Y), \label{eq:dpi_horizontal}\\
    I(W;Y) &\leq I(X;W). \label{eq:dpi_diagonal}
\end{align}
No encoder extracts more information about $Y$ than $X$ contains.
Nor can it convey more about $Y$ than about $X$.
These bounds confine the achievable region to
\begin{align}
    \mathcal{C} := \bigl\{(R,\Delta) : 0\leq \Delta \leq \min(R,\;I(X;Y))\bigr\},
    \label{eq:converse}
\end{align}
where $R:=I(X;W)$ and $\Delta:=I(W;Y)$.
The boundary of $\mathcal{C}$ consists of the diagonal
$\Delta=R$ for $R\leq I(X;Y)$ and the horizontal line
$\Delta=I(X;Y)$ for $R\geq I(X;Y)$.

\medskip
\noindent\textbf{Achievable endpoints.}
Two operating points are always achievable:
\begin{itemize}
    \item \textit{Trivial encoder} ($W\perp X$): gives $(R,\Delta)=(0,0)$.
    \item \textit{Minimal sufficient statistic} ($W=W^*=p(Y|X)$): gives $(R,\Delta)=(H(W^*),\;I(X;Y))$,
    by Theorem~\ref{thm:mss}, i.e. $I(X;W^*)=H(W^*)$ and $I(W^*;Y)=I(X;Y)$.
\end{itemize}

\medskip
\noindent\textbf{The gap.}
The endpoint $(H(W^*),I(X;Y))$ lies on the diagonal \eqref{eq:dpi_diagonal}
if and only if $H(W^*)=I(X;Y)$, equivalently $H(W^*|Y)=0$.
In general,
\begin{align}
    H(W^*) = I(X;Y) + H(W^*|Y),
    \label{eq:gap_identity}
\end{align}
since $H(W^*)=I(W^*;Y)+H(W^*|Y)=I(X;Y)+H(W^*|Y)$.
The gap $H(W^*|Y)\geq 0$ is the entropy overhead that the optimal encoder spends on predictive distributions that
are indistinguishable from their outputs alone.
When $H(W^*|Y)>0$ the endpoint lies strictly below the diagonal.

\medskip
\noindent\textbf{Time sharing lower bound.}
Mixing $W^*$ with the trivial encoder with probability $\lambda\in[0,1]$,
independently of $(X,Y)$, achieves every point
\begin{align}
    \bigl(\lambda H(W^*),\;\lambda I(X;Y)\bigr),
    \quad\lambda\in[0,1].
    \label{eq:timesharing}
\end{align}
This traces a line from $(0,0)$ to $(H(W^*),I(X;Y))$ with slope
\begin{align}
    \frac{I(X;Y)}{H(W^*)} = \frac{I(X;Y)}{I(X;Y)+H(W^*|Y)} \leq 1,
    \label{eq:ts_slope}
\end{align}
with equality if and only if $H(W^*|Y)=0$.
When the slope is strictly below~$1$, the time-sharing line lies below
the diagonal. A \textit{triangular gap} opens between the converse
boundary $\mathcal{C}$ and the time-sharing line.

\medskip
\noindent\textbf{The IB curve \textnormal{(see Section~\ref{sec:ibcurve})}.}
The time-sharing line \eqref{eq:timesharing} is a lower bound on the
IB curve. The converse boundary $\mathcal{C}$ is an upper bound.
The gap between them depends on $H(W^*|Y)$.
By Theorem~\ref{thm:convex}, the IB curve is concave and lies within
this gap. Its exact shape depends on $p(x,y)$.

\medskip
Figure~\ref{fig:ib_curve} summarises the information plane structure: the converse
boundary $\mathcal{C}$, the time-sharing lower bound, the IB curve
(concave, connecting $(0,0)$ to $(H(W^*),I(X;Y))$), and the
triangular gap that opens when $H(W^*|Y)>0$.

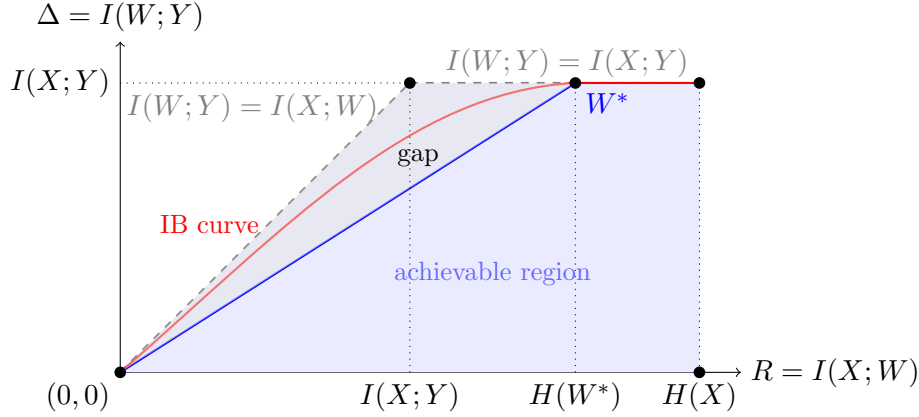
\begin{figure}[h]
\centering
\begin{tikzpicture}[scale=1.1]
% axes
\draw[->] (0,0) -- (7.5,0) node[right] {$R=I(X;W)$};
\draw[->] (0,0) -- (0,4.0) node[above] {$\Delta=I(W;Y)$};
% achievable region (trapezoid fill)
\fill[blue!8] (0,0) -- (3.5,3.5) -- (7,3.5) -- (7,0) -- cycle;
% converse boundary
\draw[thick,gray,dashed] (0,0) -- (3.5,3.5);
\draw[thick,gray,dashed] (3.5,3.5) -- (7,3.5);
\node[gray,above] at (1.6,2.9) {$I(W;Y)=I(X;W)$};
\node[gray,above] at (5.4,3.45) {$I(W;Y)=I(X;Y)$};
% time sharing line
\draw[thick,blue] (0,0) -- (5.5,3.5);
% IB curve (concave, above time-sharing, below converse)
\draw[thick,red] (0,0) .. controls (2.0,1.75) and (3.5,3.4) .. (5.5,3.5);
% flat portion of IB curve
\draw[thick,red] (5.5,3.5) -- (7,3.5);
\node[red,left] at (1.8,1.8) {\small IB curve};
% gap shading (between converse and time-sharing)
\fill[gray!20,opacity=0.5] (0,0) -- (3.5,3.5) -- (5.5,3.5) -- cycle;
% key points
\fill (0,0) circle (2pt) node[below left] {$(0,0)$};
\fill (5.5,3.5) circle (2pt);
\fill (3.5,3.5) circle (2pt);
\fill (7,3.5) circle (2pt);
\fill (7,0) circle (2pt);
% labels
\node[left] at (0,3.5) {$I(X;Y)$};
\draw[dotted] (0,3.5) -- (3.5,3.5);
\node[below] at (5.5,0) {$H(W^*)$};
\draw[dotted] (5.5,0) -- (5.5,3.5);
\node[below] at (3.5,0) {$I(X;Y)$};
\draw[dotted] (3.5,0) -- (3.5,3.5);
\node[below] at (7,0) {$H(X)$};
\draw[dotted] (7,0) -- (7,3.5);
% annotations
\node[blue,below right] at (5.5,3.5) {$W^*$};
\node at (3.6,2.6) {\small gap};
\node[blue!60] at (4.5,1.2) {\small achievable region};
\end{tikzpicture}
\caption{The achievable region (light blue trapezoid) in the
$(I(X;W),I(W;Y))$ plane, bounded by the converse boundary
$\mathcal{C}$ (gray dashed: diagonal and horizontal data processing
inequality bounds),
the $R$-axis, and $R=H(X)$.
The blue line is the time-sharing lower bound on the IB curve.
The darker shaded triangle is the gap region between the converse
boundary and the time-sharing line.
The true IB curve (red) is concave, connects $(0,0)$ to
$(H(W^*),I(X;Y))$, and is flat at $I(X;Y)$ for $R\in[H(W^*),H(X)]$.
When $H(W^*|Y)=0$, the gap triangle collapses and the IB curve
coincides with the converse boundary up to $R=I(X;Y)=H(W^*)$.}
\label{fig:ib_curve}
\end{figure}

\begin{example}[Binary Classification and the Information Plane]
\label{ex:binary_example}

Let $X$ be uniform on $\{1,2,3,4\}$ and $Y\in\{0,1\}$ with
\begin{align}
    p(Y|X{=}1) = p(Y|X{=}2) &= (0.1,\;0.9), \notag\\
    p(Y|X{=}3) = p(Y|X{=}4) &= (0.9,\;0.1).
    \label{eq:binary_ex}
\end{align}
The minimal sufficient statistic $W^*=p(Y|X)$ takes two values
$w_1=(0.1,0.9)$ and $w_2=(0.9,0.1)$, each with probability $1/2$, so
$H(W^*)=\log 2\approx 0.693$ nats.
The marginal satisfies $p(Y{=}1)=0.5$, so $H(Y)=\log 2$ and
$H(Y|X)=H_b(0.9)\approx 0.325$ nats, where
$H_b(p)=-p\log p-(1-p)\log(1-p)$, so $I(X;Y)\approx 0.368$ nats.

The endpoint $(H(W^*),I(X;Y))\approx(0.693,\;0.368)$ lies strictly
below the diagonal.
The gap is $H(W^*|Y)=H(W^*)-I(X;Y)\approx 0.325$ nats:
given $Y=1$, the posterior $p(W^*{=}w_1|Y{=}1)=0.9\neq 1$,
so observing $Y$ does not determine which predictive distribution
generated it.

The time-sharing line has slope $I(X;Y)/H(W^*)\approx 0.531$, and the
triangular gap has vertices $(0,0)$, $(0.368,\;0.368)$, and $(0.693,\;0.368)$.
The IB curve lies in this gap.

\end{example}

\subsection{The IB Curve}
\label{sec:ibcurve}

Define the \textit{IB curve} as the upper boundary of the achievable
region: for each rate $R\geq 0$,
\begin{align}
    \Delta^*(R) := \max_{p(w|x):\,I(X;W)\leq R} I(W;Y).
    \label{eq:ib_curve_def}
\end{align}
Equivalently, in terms of rate-distortion,
\begin{align}
    \Delta^*(R) = I(X;Y) - \varepsilon^*(R),
    \label{eq:ib_rd}
\end{align}
where $\varepsilon^*(R)$ is the minimum distortion achievable at rate
$R$ (inverse of \eqref{eq:rd_function}).

\begin{theorem}[Concavity of the IB curve]
\label{thm:convex}
The IB curve $\Delta^*(R)$ is concave and non-decreasing in $R$.
Equivalently, the rate-distortion function $R(\varepsilon)$ \eqref{eq:rd_function} is convex and
non-increasing in $\varepsilon\in[0,I(X;Y)]$.
\end{theorem}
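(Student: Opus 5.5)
The plan is the standard time-sharing (convex-combination) argument from rate-distortion theory, applied through Proposition~\ref{prop:rd}, which turns the constraint $I(X;Y|W)\le\varepsilon$ into the linear expected-distortion constraint $\mathbb{E}[d(X,W)]\le\varepsilon$. Monotonicity is immediate from the definitions. In \eqref{eq:rd_function} the feasible set $\{p(w|x): I(X;Y|W)\le\varepsilon\}$ grows with $\varepsilon$, so $R(\varepsilon)$ is non-increasing. Likewise, the feasible set in \eqref{eq:ib_curve_def} grows with $R$, so $\Delta^*(R)$ is non-decreasing.

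For convexity, fix $\varepsilon_0,\varepsilon_1\in[0,I(X;Y)]$ and $\lambda\in[0,1]$, and take encoders $p_0(w|x),p_1(w|x)$ attaining $R(\varepsilon_0),R(\varepsilon_1)$. Attainment holds in the finite-alphabet case; otherwise use $\delta$-optimal encoders and let $\delta\to0$. Introduce an independent time-sharing bit $Q\sim\mathrm{Bern}(\lambda)$, independent of $(X,Y)$. Define $W_\lambda:=(Q,W_Q)$, where $W_Q$ is drawn from $p_Q(w|x)$. The Markov chain $Y-X-W_\lambda$ still holds. Alternatively, one could take a disjoint-union alphabet and the mixture kernel $p_\lambda(w|x)=(1-\lambda)p_0+\lambda p_1$ on disjoint copies; this is equivalent, since $Q$ is then a function of $W$.

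\textbf{Distortion.} Since $Q$ is recoverable from $W_\lambda$, we have $p(y|W_\lambda=(q,w))=p_q(y|w)$. Therefore
\[\mathbb{E}[d(X,W_\lambda)]=(1-\lambda)\mathbb{E}_0[d]+\lambda\mathbb{E}_1[d]\le(1-\lambda)\varepsilon_0+\lambda\varepsilon_1.\]

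\textbf{Rate.} Because $Q\perp X$, the chain rule gives
\[I(X;W_\lambda)=I(X;Q)+I(X;W_Q|Q)=(1-\lambda)I_0(X;W)+\lambda I_1(X;W).\]
Hence
\[R\bigl((1-\lambda)\varepsilon_0+\lambda\varepsilon_1\bigr)\le(1-\lambda)R(\varepsilon_0)+\lambda R(\varepsilon_1),\]
which is convexity.

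The main point needing care is the distortion step. Here $d(x,w)$ depends on the encoder itself through the self-consistent decoder $p(y|w)$, so it is not a fixed distortion measure as in classical rate-distortion theory. Linearity of the expected distortion under mixing therefore has to be checked. The disjoint-alphabet (or explicit $Q$) construction handles this: each branch keeps its own decoder, and the mixed $W_\lambda$ neither merges nor alters posteriors. The identity $\mathbb{E}[d]=I(X;Y|W)$ then transfers exactly. As a cross-check, I would verify directly that $I(X;Y|W_\lambda)=(1-\lambda)I_0(X;Y|W)+\lambda I_1(X;Y|W)$, using $I(X;Y|Q)=I(X;Y)$ and $H(Y|W_\lambda)=H(Y|W_Q,Q)$.

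Finally, concavity of $\Delta^*$ follows from the same construction read in the $(R,\Delta)$ plane. Mixing optimal encoders for rates $R_0,R_1$ gives rate $(1-\lambda)R_0+\lambda R_1$ and $I(W_\lambda;Y)=(1-\lambda)\Delta^*(R_0)+\lambda\Delta^*(R_1)$, using $I(W_\lambda;Y)=I(Q;Y)+I(W_Q;Y|Q)$ with $I(Q;Y)=0$. Thus the achievable region is convex and its upper boundary is concave. This is consistent with \eqref{eq:ib_rd}, since the graph of $\Delta^*$ is the reflection of the convex, non-increasing graph of $R(\varepsilon)$ under $\Delta=I(X;Y)-\varepsilon$.
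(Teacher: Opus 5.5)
Your proposal is correct and follows the paper's proof essentially step for step: the same independent Bernoulli time-sharing variable $Q$ with $W=(Q,W_Q)$, the same chain-rule rate decomposition using $Q\perp X$, the same branchwise distortion decomposition, $\delta$-optimal encoders when the minimum is not attained, and monotonicity from nested feasible sets. The only differences are cosmetic. You verify the distortion through $\mathbb{E}[d(X,W)]$ with branchwise decoders, whereas the paper decomposes $I(X;Y|Q,W_Q)$ directly, which is your cross-check. You also get concavity of $\Delta^*$ by mixing directly in the $(R,\Delta)$ plane instead of inverting $R(\varepsilon)$; there the mixed rate is at most, not equal to, $(1-\lambda)R_0+\lambda R_1$, which suffices.
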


\begin{proof}
    See Appendix~\ref{app:convexity}.
\end{proof}

\medskip
\noindent\textbf{Bounds of the IB curve.}
Combined with the bounds of Section~\ref{sec:achievable},
Theorem~\ref{thm:convex} gives the following characterization:
\begin{itemize}
    \item The IB curve starts at $(0,0)$ and ends at
    $(H(W^*),I(X;Y))$.
    \item It is concave and lies weakly above the achievable time-sharing line
    \eqref{eq:timesharing}.
    \item It lies weakly below the converse boundary $\mathcal{C}$
    \eqref{eq:converse}.
    \item For $R\geq H(W^*)$, the curve is flat at
    $\Delta^*(R)=I(X;Y)$: one can augment $W^*$ with additional
    information about $X$ to increase the rate without affecting
    the predictive information (see Appendix~\ref{app:flat_portion}).
    \item The curve lies on the diagonal if and only if $H(W^*|Y)=0$. The gap then vanishes and $H(W^*)=I(X;Y)$.
\end{itemize}

The exact shape of the IB curve within the gap region depends on
$p(x,y)$ through the geometry of the predictive manifold
$\mathcal{M}\subset\mathcal{P}_K$ and has no closed form in general.
It can be computed iteratively via the Blahut-Arimoto algorithm \cite{blahut1972computation,arimoto1972algorithm}.

\medskip
\noindent\textbf{Slope of the IB curve.}
The normalization term in \eqref{eq:lagrangian} vanishes at any feasible $p(w|x)$.
Using $I(X;Y|W)=I(X;Y)-I(W;Y)$, the remaining terms give
\begin{align}
    \mathcal{L} = R - \beta\,\Delta + \beta\bigl(I(X;Y)-\varepsilon\bigr).
    \label{eq:lagrangian_rearranged}
\end{align}
Assuming the minimum in \eqref{eq:rd_function} is attained, each
$\varepsilon\in[0,I(X;Y)]$ determines a unique Lagrange multiplier
$\beta(\varepsilon):=-R'(\varepsilon)\geq 0$ (see Remark~\ref{rem:beta_slope}).
The level sets of $R-\beta\Delta$ in \eqref{eq:lagrangian_rearranged} are lines of slope
$1/\beta$ in the $(R,\Delta)$ plane.
The optimal encoder for distortion $\varepsilon$ is the point where such a line is tangent to the IB curve.
Denoting this tangent line $\Delta_\varepsilon(R)$,
\begin{align}
    \Delta_\varepsilon(R) = \frac{R}{\beta(\varepsilon)} + c(\varepsilon),
    \qquad
    c(\varepsilon) := I(X;Y) - \varepsilon - \frac{R(\varepsilon)}{\beta(\varepsilon)},
    \label{eq:ib_tangent}
\end{align}
where $c(\varepsilon)$ is the $\Delta$-axis intercept (see Figure~\ref{fig:ib_tangent}).

\begin{figure}[h]
\centering
\begin{tikzpicture}[scale=1.1]
% axes  (same as fig:ib_curve)
\draw[->] (0,0) -- (7.5,0) node[right] {$R=I(X;W)$};
\draw[->] (0,0) -- (0,4.0) node[above] {$\Delta=I(W;Y)$};
% IB curve  (same Bezier as fig:ib_curve)
\draw[thick,red] (0,0) .. controls (2.0,1.75) and (3.5,3.4) .. (5.5,3.5);
\draw[thick,red] (5.5,3.5) -- (7,3.5);
\node[red,left] at (1.8,1.8) {\small IB curve};
% endpoint reference lines  (same as fig:ib_curve)
\node[left] at (0,3.5) {$I(X;Y)$};
\draw[dotted] (0,3.5) -- (5.5,3.5);
\node[below] at (5.5,0) {$H(W^*)$};
\draw[dotted] (5.5,0) -- (5.5,3.5);
% Operating point at t≈0.5 on the Bezier: approximately (2.8, 2.4)
\coordinate (mid) at (2.8,2.4);
\fill[red] (mid) circle (2pt);
% \node[above right, font=\small] at (mid) {$(R_0,\,\Delta^*(R_0))$};
% Tangent line at mid: slope≈0.74, intercept c≈0.33
% y = 0.74*x + 0.33; endpoints: (0, 0.33) to (4.9, 3.95)
\draw[thick,blue] (0,0.33) -- (4.9,3.96);
\node[blue, right, font=\small] at (2.0,3.0) {slope $\tfrac{1}{\beta}$};
% y-intercept dot and label
\fill[blue] (0,0.33) circle (2pt);
\node[blue, left, font=\small] at (0,0.33) {$c(R_0)$};
% dotted drop lines from operating point to axes
\draw[dotted] (mid) -- (2.8,0) node[below, font=\small] {$R_0$};
\draw[dotted] (mid) -- (0,2.4) node[left, font=\small] {$\Delta^*(R_0)$};
\end{tikzpicture}
\caption{Tangent to the IB curve at operating point $(R_0,\Delta^*(R_0))$.
The slope of the tangent is $1/\beta$ and its $\Delta$-axis
intercept is $c(R_0)=\Delta^*(R_0)-R_0/\beta$.
As $R_0$ increases, $\beta$ grows, the slope flattens, and $c(R_0)$
rises from $0$ to $I(X;Y)$.}
\label{fig:ib_tangent}
\end{figure}
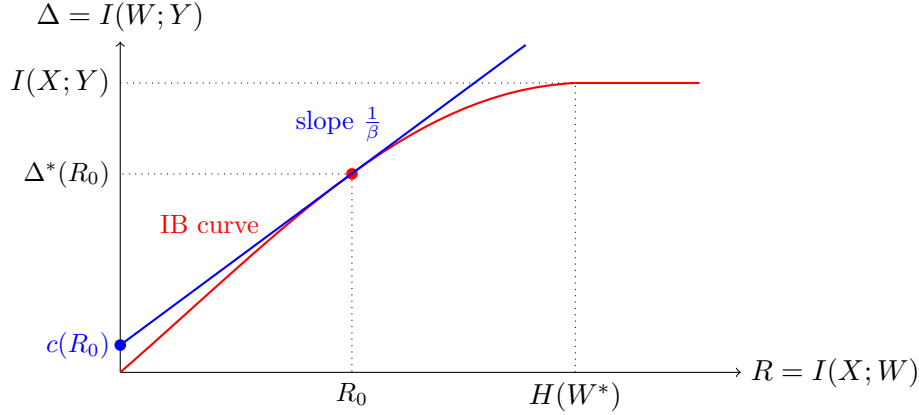

\medskip
\noindent\textbf{The critical $\beta$.}
Monotonicity alone does not imply that every $\beta>0$ yields a
non-trivial operating point.
Since the curve is concave (see Theorem~\ref{thm:convex}), its slope
$s(R)=\Delta^{*\prime}(R)$ is largest at the origin and decreases as
$R$ grows.
Write $s_0:=\Delta^{*\prime}(0^+)$ for the slope at the origin.
By the data-processing inequality \eqref{eq:dpi_diagonal}, the curve
lies weakly below the diagonal $\Delta\leq R$, so
\begin{align}
    s_0 \;\leq\; 1.
    \label{eq:s0_bound}
\end{align}
If $1/\beta>s_0$, the iso-cost lines are flatter than the curve at
the origin, no interior tangent exists, and the trivial encoder
$W\perp X$ is optimal.
If $1/\beta<s_0$, an interior tangent point exists. As $\beta$
grows it slides further along the curve.
The critical value is therefore
\begin{align}
    \beta_c \;:=\; \frac{1}{s_0} \;\geq\; 1,
    \label{eq:beta_c}
\end{align}
with equality $\beta_c=1$ when $H(W^*|Y)=0$, in which case the IB
curve is tangent to the diagonal at the origin and coincides with it
up to $R=I(X;Y)$.
For any $\beta<\beta_c$ we have $R(\beta)=0$ and
$\varepsilon(\beta)=I(X;Y)$. The interior of the IB curve is traced
only by $\beta\in(\beta_c,\infty)$.
Stochastic tasks typically have $\beta_c>1$. The exact value depends
on $p(x,y)$ through the slope of the IB curve at the origin, but
the qualitative picture is universal: phase transition at
$\beta_c\geq 1$, non-trivial representation above it, and
collapse below.

\begin{proposition}[Monotonicity in $\beta$]
    \label{prop:monotone_beta}
    For each $\beta\geq 0$, let $p_\beta(w|x)$ denote the optimal encoder
    minimizing $I(X;W)+\beta\,I(X;Y|W)$, and let
    $\varepsilon(\beta):=I(X;Y|W)$ and $R(\beta):=I(X;W)$ denote the
    distortion and rate achieved by $p_\beta$.
    Then $\varepsilon(\beta)$ is non-increasing in $\beta$ and $R(\beta)$
    is non-decreasing in $\beta$.
\end{proposition}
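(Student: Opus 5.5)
The plan is the standard exchange (revealed-preference) argument for parametric minimizers; no smoothness of $R(\varepsilon)$ or Theorem~\ref{thm:convex} is needed. Fix $0\leq\beta_1<\beta_2$. Write $(R_1,\varepsilon_1)$ and $(R_2,\varepsilon_2)$ for the rate and distortion of $p_{\beta_1}$ and $p_{\beta_2}$, where
\begin{align}
    \varepsilon_i=I(X;Y|W)=\mathbb{E}[d(X,W)]
\end{align}
by Proposition~\ref{prop:rd}. Both encoders are feasible for either objective, since the Lagrangian \eqref{eq:rd_lagrangian} is minimized over the same set of conditional distributions $p(w|x)$ for every $\beta$. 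Optimality of $p_{\beta_1}$ for the $\beta_1$-objective therefore gives
\begin{align}
    R_1+\beta_1\varepsilon_1 &\leq R_2+\beta_1\varepsilon_2,
\end{align}
and optimality of $p_{\beta_2}$ for the $\beta_2$-objective gives
\begin{align}
    R_2+\beta_2\varepsilon_2 &\leq R_1+\beta_2\varepsilon_1.
\end{align}

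Adding the two inequalities cancels the rates and leaves $(\beta_2-\beta_1)(\varepsilon_2-\varepsilon_1)\leq 0$. Since $\beta_2>\beta_1$, this gives $\varepsilon(\beta_2)\leq\varepsilon(\beta_1)$, so the distortion is non-increasing. Substituting back into the first inequality gives $R_1-R_2\leq\beta_1(\varepsilon_2-\varepsilon_1)\leq 0$, using $\beta_1\geq 0$. Hence $R(\beta_1)\leq R(\beta_2)$, so the rate is non-decreasing.

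The main obstacle is well-definedness, not the inequalities. The statement speaks of ``the'' optimal encoder, but minimizers may not exist in general, and when they exist they need not be unique, so $\varepsilon(\beta)$ and $R(\beta)$ could depend on which minimizer is chosen. I would handle this in two steps. First, invoke the standing finite-alphabet assumption from Section~\ref{sec:lagrangian}. Under it, the objective is continuous on the compact set of stochastic matrices with bounded $|\mathcal{W}|$, and by the canonical representation of Section~\ref{sec:geometric} it suffices to consider such bounded $|\mathcal{W}|$, so minimizers exist. Second, I would note that the exchange argument applies to \emph{any} selection of minimizers $p_{\beta_1}$ and $p_{\beta_2}$. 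The monotonicity conclusion therefore holds for every choice, which is the sense in which the proposition should be read. At a value of $\beta$ where the minimizer is not unique, ties can occur, consistent with a kink of $R(\varepsilon)$ or a flat segment of the IB curve. This is why only weak monotonicity is claimed.
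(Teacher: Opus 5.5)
Your proposal is correct and matches the paper's proof in Appendix~\ref{app:monotone_beta}: the same two optimality inequalities are summed to get $\varepsilon_2\le\varepsilon_1$, and the first is then reused with $\beta_1\ge 0$ to get $R_1\le R_2$. Your observation that the argument holds for any selection of minimizers is a useful addition the paper omits. The existence claim for bounded $|\mathcal{W}|$ should rest on a Carath\'eodory/support-lemma argument rather than on the canonical representation, which places $W$ in the continuum $\mathcal{P}_K$ and does not by itself bound $|\mathcal{W}|$.
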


\begin{proof}
    See Appendix~\ref{app:monotone_beta}.
\end{proof}

\medskip
\noindent\textbf{Endpoint values.}
The two extremes of the IB curve are:
\begin{center}
\begin{tabular}{|l|l|l|l|l|l|}
\hline
Encoder & $\varepsilon$ & $R(\varepsilon)$ & $\beta(\varepsilon)$ & Slope & $c(\varepsilon)$ \\
\hline
Trivial & $I(X;Y)$ & $0$ & $\beta_c$ & $1/\beta_c$ & $0$ \\
Sufficient & $0$ & $H(W^*)$ & $\infty$ & $0$ & $I(X;Y)$ \\
\hline
\end{tabular}
\end{center}

\medskip
\noindent\textbf{Intercept $c(\varepsilon)$ is non-decreasing.}
As $\varepsilon$ decreases (lower distortion), $R(\varepsilon)$ increases and
$\beta(\varepsilon)$ increases by Proposition~\ref{prop:monotone_beta}.
Writing $\beta$ for $\beta(\varepsilon)$ and viewing $c$ as a function of $R$
(using $dR/d\varepsilon=-\beta$ and hence $d\varepsilon/dR=-1/\beta$),
\begin{align}
    \frac{dc}{dR}
    = \frac{1}{\beta} - \frac{d}{dR}\!\left(\frac{R}{\beta}\right)
    = \frac{1}{\beta} - \frac{\beta - R\,d\beta/dR}{\beta^2}
    = \frac{R}{\beta^2}\,\frac{d\beta}{dR} \geq 0,
    \label{eq:c_nondecreasing}
\end{align}
since $d\beta/dR\geq 0$ (see Proposition~\ref{prop:monotone_beta}) and $R\geq 0$.

\medskip
\noindent\textbf{IB curve regimes.}
The IB curve can be divided into three regimes:
\begin{itemize}
    \item $0\leq\beta\leq\beta_c$: the trivial encoder $W\perp X$ is
    optimal, giving the origin $(R,\Delta)=(0,0)$ and distortion
    $\varepsilon=I(X;Y)$.
    \item $\beta_c<\beta<\infty$: non-trivial stochastic soft
    clustering. The operating point
    $(R(\beta),\Delta(\beta))$ is determined by the
    self-consistency equations
    \eqref{eq:marginal}--\eqref{eq:decoder} and traces the interior
    of the IB curve.
    \item $\beta\to\infty$: hard assignment to the partition of
    Theorem~\ref{thm:mss}, recovering $W^*=p(Y|X)$ at the endpoint
    $(H(W^*),I(X;Y))$ by Corollary~\ref{cor:hard}.
\end{itemize}

\subsection{The Gap and Representation Regimes}

When $H(W^*|Y)>0$, the endpoint lies strictly below the diagonal:
the encoder uses $H(W^*)$ nats but only $I(X;Y)<H(W^*)$ are predictive of $Y$.
The gap $H(W^*|Y)$ is information in $W^*$ about which equivalence
class $x$ belongs to that is not deducible from $Y$ alone.
It vanishes whenever distinct equivalence classes $C_i\neq C_j$
assign probability to disjoint subsets of $\mathcal{Y}$ — equivalently,
whenever the map from $Y$ to $W^*$ is deterministic.
Otherwise the gap is strictly positive, an intrinsic cost of the
task's stochasticity.

\medskip
\noindent\textbf{Deterministic regime.}
If $Y$ is a deterministic function of $X$, then $H(Y|X)=0$ and
$W^*=p(Y|X)$ takes values at the vertices of $\mathcal{P}_K$.
Thus $H(W^*|Y)=0$, $H(W^*)=H(Y)=I(X;Y)$, and the endpoint lies on the diagonal.

\begin{example}[Deterministic binary task]
Let $X$ be uniform on $\{1,2,3,4\}$ with $Y=1$ for $X\in\{1,2\}$ and
$Y=0$ for $X\in\{3,4\}$.
$W^*$ takes two values $e_0=(1,0)$ and $e_1=(0,1)$, each with
probability $1/2$.
Since $H(Y|X)=0$, $I(X;Y)=H(Y)=\log 2\approx 0.693$ nats and
$H(W^*)=\log 2$. The endpoint $(\log 2,\log 2)$ lies on the diagonal.
By contrast, Example~\ref{ex:binary_example} has $H(W^*|Y)\approx 0.325$ nats
and the IB curve lies strictly below the converse boundary.
\end{example}

\medskip
\noindent\textbf{Near-deterministic regime.}
If $p(Y|X)$ is nearly deterministic, $H(Y|X)\approx 0$ and most
inputs map to predictive distributions near the vertices of $\mathcal{P}_K$.
Thus $I(X;Y)\approx H(Y)$, the map from $Y$ to $W^*$ is nearly
deterministic, and $H(W^*|Y)\approx 0$, so $H(W^*)\approx I(X;Y)$.
The gap is small and the IB curve nearly touches the converse boundary.

\medskip
\noindent\textbf{Ambiguous regime.}
If $p(Y|X)$ varies smoothly across the interior of the simplex,
$W^*=p(Y|X)$ takes values spread continuously inside $\mathcal{P}_K$.
$H(W^*)$ can then be much larger than $I(X;Y)$, the gap $H(W^*|Y)$
is large, and the IB curve bows significantly below the diagonal.

\medskip
The geometry of the optimal representation is most clearly described
at $\varepsilon=0$, where the magnitude of the gap determines the structure.
In all regimes, for $\varepsilon>0$ the operating point moves along the
IB curve toward the origin, with the encoder progressively merging
nearby regions of $\mathcal{M}$ into coarser clusters.

%%%%%%%%%%%%%%%%%%%%%%%%%%%%%%%%%%%%%%%%%%%%%%%%%%%%%%%%%%%%

\section{Learning with Less Supervision}
\label{sec:less_supervision}

\subsection{Beyond Supervised Learning}

In the fully supervised setting, the encoder is trained on i.i.d.\ samples
from the true joint distribution $p(x,y)$, directly optimizing for
sufficiency with respect to $p(Y|X)$.
In the semi-supervised setting \cite{vanengelen2020survey}, labeled
pairs $(x,y)$ are available but scarce or biased: they may come from
multiple sources, cover only a subset of the input space, or reflect a
task not representative of the full predictive structure of $X$.
In the self-supervised setting \cite{balestriero2025lejepa}, no human labels are
available during encoder training. The target $Y$ is replaced by a
self-generated signal, typically the embedding of another view of
the same input. Human labels are used only for downstream decoder
training.
In both semi-supervised and self-supervised settings, directly
optimizing for sufficiency with respect to the available labeling
risks overfitting to the observed signal.
A distributional regularizer on the encoder output is therefore needed.

\subsection{Maximum Entropy on the Predictive Manifold}

The canonical representation $W\in\mathcal{P}_K$ induces a measure on
the simplex determined jointly by $p(X)$ and the stochastic clustering
kernel (see Section~\ref{sec:geometric}).
When the labeling is biased, this measure reflects the bias of the
observed $(x,y)$ pairs rather than the true structure of $p(Y|X)$.

Maximum entropy on the induced measure over the simplex is the natural
response: it assumes no preference for any region of the predictive
manifold. All predictive distributions are treated equally.
This principle applies to the semi-supervised and self-supervised
settings, where labeling is absent or biased. In the fully supervised
setting, the conditional rate $I(X;W|Y)$ already constrains the encoder
toward the correct predictive structure (see Section~\ref{sec:supervised}).

The maximum entropy distribution on $\mathcal{P}_K$ with respect to the
Lebesgue measure is the flat Dirichlet \cite{cover2006elements}:
\begin{align}
    W \sim \mathrm{Dir}(1,\ldots,1),
    \label{eq:flat_dirichlet}
\end{align}
which assigns uniform density to all points in $\mathcal{P}_K$,
reflecting complete uncertainty about the labeling structure.
Direct optimization on the simplex is numerically problematic: gradients
vanish or explode near the boundary, and the reparameterization trick
is less stable than in Euclidean spaces.
A Euclidean relaxation of \eqref{eq:flat_dirichlet}, trading a
bounded entropy overhead for numerical tractability, is therefore
preferred.

\subsection{From Simplex to Euclidean Space}
\label{sec:simplex_to_euclidean}

The flat Dirichlet can be constructed from simpler distributions via
the following chain of transformations, each trading exact simplex
geometry for a looser but more tractable representation.

\medskip
\noindent\textbf{Step 1: Exponential construction of the flat Dirichlet.}
If $X_1,\ldots,X_K\stackrel{\mathrm{iid}}{\sim}\mathrm{Exp}(1/2)$, then
the normalized vector
\begin{align}
    D = \frac{(X_1,\ldots,X_K)}{\sum_{k=1}^K X_k}
    \label{eq:exp_construction}
\end{align}
follows $\mathrm{Dir}(1,\ldots,1)$ exactly \cite{ferguson1973bayesian}.
This construction replaces the $(K-1)$-dimensional simplex constraint
with $K$ non-negative dimensions.
The only extra degree of freedom introduced is the total scale
$S=\sum_k X_k$, which is divided out by the normalization in
\eqref{eq:exp_construction}.

\medskip
\noindent\textbf{Step 2: Gaussian construction of the exponential.}
If $Z_{k,1}, Z_{k,2}\stackrel{\mathrm{iid}}{\sim}\mathcal{N}(0,1)$,
then
\begin{align}
    X_k := Z_{k,1}^2 + Z_{k,2}^2 \sim \chi^2(2) = \mathrm{Exp}(1/2).
    \label{eq:gaussian_exp}
\end{align}
Thus the flat Dirichlet can be constructed from $2K$ i.i.d.\ standard
Gaussians.
Each Gaussian pair $(Z_{k,1},Z_{k,2})$ encodes two degrees of freedom:
the power $X_k=Z_{k,1}^2+Z_{k,2}^2$, which enters the construction,
and the phase $\theta_k=\arctan(Z_{k,2}/Z_{k,1})$, which does not.
The map to $X_k$ depends only on the power and is phase-invariant.

The three parameterizations are summarized in Table~\ref{tab:param}
and illustrated in Figure~\ref{fig:chain}.
Each trades exactness for tractability:
the flat Dirichlet is exact but numerically unstable near the simplex
boundary \cite{joo2020dirichlet},
the exponential parameterization introduces a scale overhead of
$\tfrac{1}{2}\log(2\pi eK)$ nats, vanishing relative to the total
entropy $K$ as $K\to\infty$, and
the Gaussian parameterization incurs a non-diminishing phase entropy
overhead of $K\log 2\pi$ nats but provides an unconstrained, fully
reparameterizable optimization landscape.
\begin{table}[h]
\centering
\begin{tabular}{lccl}
\hline
\textbf{Prior} & \textbf{Dimensions} & \textbf{Extra DoF} &
\textbf{Entropy overhead} \\
\hline
$\mathrm{Dir}(1,\ldots,1)$ & $K-1$ & None & None \\
$K$ i.i.d.\ $\mathrm{Exp}(1/2)$ & $K$ & 1 scale & $\tfrac{1}{2}\log(2\pi eK)$ nats \\
$2K$ i.i.d.\ $\mathcal{N}(0,1)$ & $2K$ & 1 scale $+$ $K$ phases &
$\tfrac{1}{2}\log(2\pi eK) + K\log(2\pi)$ nats \\
\hline
\end{tabular}
\caption{Parameterizations of the flat Dirichlet and their entropy overheads.}
\label{tab:param}
\end{table}

\begin{figure}[h]
\centering
\begin{tikzpicture}[
    node distance=2.2cm,
    box/.style={draw, rounded corners, minimum height=1.8cm,
    minimum width=2.8cm, align=center, fill=blue!5},
    arrow/.style={->, thick, >=stealth}
]
\node[box] (gauss) {$2K$ i.i.d.\\
    $Z_{k,j}\sim\mathcal{N}(0,1)$\\[2pt]
    \footnotesize dim $= 2K$};

\node[box, right=of gauss] (exp) {$K$ i.i.d.\\
    $X_k\sim\mathrm{Exp}(1/2)$\\[2pt]
    \footnotesize dim $= K$};

\node[box, right=of exp] (dir) {Flat Dirichlet\\
    $D\sim\mathrm{Dir}(1,\ldots,1)$\\[2pt]
    \footnotesize dim $= K{-}1$};

\draw[arrow] (gauss) -- (exp);
    % node[midway, above, font=\footnotesize, align=center] {$X_k = Z_{k,1}^2+Z_{k,2}^2$\\[1pt] \footnotesize loses $K$ phases}
    % node[midway, below, font=\footnotesize, align=center] {overhead: $K\log 2\pi$ nats};
    
\draw[arrow] (exp) -- (dir);
    % node[midway, above, font=\footnotesize, align=center] {$D_k = X_k/\!\sum_j X_j$\\[1pt] \footnotesize loses 1 scale}
    % node[midway, below, font=\footnotesize, align=center] {overhead: $\tfrac{1}{2}\!\log(2\pi eK)$ nats};

\node[font=\footnotesize, gray, below=0.3cm of gauss] {unconstrained, smooth};
\node[font=\footnotesize, gray, below=0.3cm of exp] {non-negative};
\node[font=\footnotesize, gray, below=0.3cm of dir] {simplex-constrained};
\end{tikzpicture}
\caption{The chain of transformations connecting the isotropic
Gaussian to the flat Dirichlet on $\mathcal{P}_K$.
Reading right to left: the flat Dirichlet is the maximum entropy
prior on $\mathcal{P}_K$. The exponential parameterization embeds
it in $\mathbb{R}_+^K$ with a diminishing scale overhead. The
Gaussian parameterization embeds it in $\mathbb{R}^{2K}$ with a
non-diminishing phase overhead of $K\log 2\pi$ nats in $I(X;W)$
that is invariant under the simplex map, but provides the smoothest
optimization landscape.}
\label{fig:chain}
\end{figure}
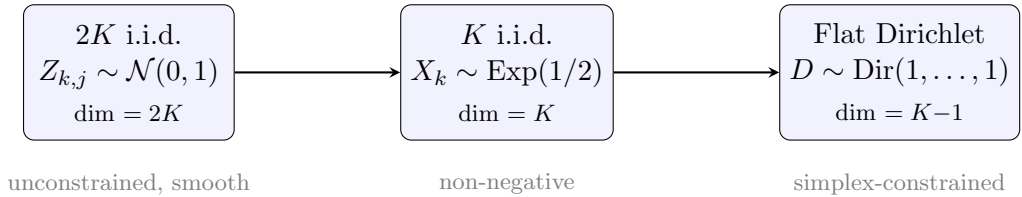

\subsection{Connection to SIGReg}
\label{sec:sig_connection}

SIGReg \cite{balestriero2025lejepa}
is a Gaussian relaxation of the maximum entropy principle on the
predictive manifold.
SIGReg enforces full distributional matching of the encoder marginal
to $\mathcal{N}(0,I)$, not merely moment matching.
Given a minibatch of $N$ encoder outputs
$w_1,\ldots,w_N\in\mathbb{R}^{2K}$, SIGReg projects them along $M$
random unit directions $a_1,\ldots,a_M$ and applies the Epps-Pulley
characteristic-function test \cite{epps1983test} to each set of
one-dimensional projections $\{a_m^\top w_n\}_{n=1}^N$, then
averages the test statistics:
\begin{align}
    \mathcal{L}_{\mathrm{SIGReg}}(q(w))
    := \frac{1}{M}\sum_{m=1}^{M}
    T_{\mathrm{EP}}\!\bigl(\{a_m^\top w_n\}_{n=1}^N\bigr),
    \label{eq:sigreg}
\end{align}
where $q(w) := \mathbb{E}_{p(x)}[q(w|x)]$ is the encoder marginal
estimated from the minibatch, and $T_{\mathrm{EP}}$ is the Epps-Pulley statistic testing
normality of a univariate sample.
By the Cram\'er-Wold theorem \cite{billingsley1995probability}, matching all one-dimensional projections
to a standard Gaussian is equivalent to matching the full distribution
to $\mathcal{N}(0,I)$.
This is strictly stronger than moment matching: matching any finite
number of moments does not determine the distribution in general
\cite{balestriero2025lejepa}, but characteristic-function matching does.

SIGReg enforces the $2K$ Gaussian parameterization of the flat Dirichlet.
The encoder outputs live in $\mathbb{R}^{2K}$, not on the simplex.
The simplex and the Gaussian space are alternative representations of
the same predictive information, related by the chain of
Section~\ref{sec:simplex_to_euclidean}.
No explicit inverse to the simplex is needed in practice: the decoder
maps directly from the Gaussian representation to predictions.
This parameterization introduces a non-diminishing phase entropy
overhead of $K\log 2\pi$ nats relative to the exact flat Dirichlet.
This overhead is a cost in \emph{rate accounting only}, not in
achievable prediction.
By the construction of Section~\ref{sec:simplex_to_euclidean}, the map
$\pi:\mathbb{R}^{2K}\to\mathcal{P}_K$ defined by
$\pi(w)_k := (w_{k,1}^2+w_{k,2}^2)/\sum_j(w_{j,1}^2+w_{j,2}^2)$
depends on $w$ only through the squared radii
$r_k^2=w_{k,1}^2+w_{k,2}^2$. It is therefore invariant under the
action of $SO(2)^K$ on the $K$ phase planes.
The phases are nuisance coordinates of the simplex map: they carry
no predictive content.

Concretely, applying the data processing inequality to the chain
$X\to W\to\pi(W)$ gives
\begin{align}
    I(\pi(W);Y) \;\leq\; I(W;Y) \;\leq\; I(X;Y),
    \label{eq:phase_dpi}
\end{align}
Any decoder that depends on $w$ only through $\pi(w)$ achieves
$I(\pi(W);Y)=I(W;Y)$.
Equivalently, when $W$ is generated by the Dirichlet--Gaussian chain
the Bayes-optimal predictive distribution $p(y|w)$ is itself
phase-invariant.
The $K\log 2\pi$ phase entropy is therefore rate that the encoder
\emph{can} spend on phase-equivalent distinctions contributing nothing
to $I(W;Y)$: a horizontal shift of the operating point in the
$(R,\Delta)$ plane, increasing $I(X;W)$ without reducing $I(W;Y)$.

The exponential parameterization eliminates the phase slack, reducing
the overhead to the vanishing $\tfrac{1}{2}\log(2\pi eK)$ scale term,
at the cost of non-negativity constraints and less stable gradients
near the simplex boundary \cite{joo2020dirichlet}.
The Gaussian parameterization is preferred when $\beta$ is a free
hyperparameter: the phase overhead is then absorbed into $\beta$
without affecting predictions.
The exponential parameterization is more appropriate when the exact
value of $I(X;W)$ matters, for example when $\beta$ is fixed
by a constraint or when estimating the IB curve quantitatively.

SIGReg has three consequences within the IB framework.
First, it prevents representational collapse: the encoder cannot map
all inputs to a constant, since this would violate the isotropic
Gaussian constraint.
Second, since the marginal $p(W)$ is constrained to approximate
$\mathcal{N}(0,I)$, the marginal entropy $H(W)$ is approximately
constant. The encoding rate
\begin{align}
    I(X;W) = H(W) - H(W|X)
    \label{eq:rate_decomp}
\end{align}
is controlled primarily via $H(W|X)$: compression is achieved by
increasing the conditional entropy of $W$ given $X$, rather than by
collapsing the marginal.
Third, the encoder can be trained independently of the decoder and of
the specific labeling: \eqref{eq:sigreg} requires no labels and can be
applied to all available data, labeled or not.

%%%%%%%%%%%%%%%%%%%%%%%%%%%%%%%%%%%%%%%%%%%%%%%%%%%%%%%%%%%%

\section{Practical Learning using IB}

Optimizing \eqref{eq:opt} directly requires access to true mutual
information quantities, which are intractable in general.
Three learning settings are considered.
In the supervised case, the IB Lagrangian is estimated from
minibatches via the CEB decomposition.
In the semi-supervised case, labels are scarce: the conditional rate
is restricted to labeled data and SIGReg regularizes the encoder on
all data.
In the self-supervised case, no labels are available. The CEB
conditional rate is replaced by a view-prediction proxy and SIGReg
serves as the distributional regularizer.
In all three cases, the encoder is the primary object of interest.
Training proceeds in two stages.

\medskip
\noindent\textbf{Stage 1: Encoder training.}
The encoder $q_\phi(w|x)$ is trained by minimizing
\begin{align}
    \mathcal{L}_{\mathrm{enc}}(\phi) :=
    \mathcal{L}_{\mathrm{pred}}(\phi) +
    \mathcal{L}_{\mathrm{shape}}(\phi),
    \label{eq:enc_loss}
\end{align}
where $\mathcal{L}_{\mathrm{pred}}$ measures how well the representation
predicts the target, and $\mathcal{L}_{\mathrm{shape}}$ 
regularizes the encoder distribution $q_\phi(w|x)$.
The specific forms of both terms, and the justification for each choice,
are given in the subsections below.

\medskip
\noindent\textbf{Stage 2: Decoder training.}
With the encoder frozen, the decoder $q_\psi(y|w)$ is trained by
minimizing the cross-entropy loss on labeled data:
\begin{align}
    \mathcal{L}_{\mathrm{dec}}(\psi) :=
    -\mathbb{E}_{p(x,y)}\mathbb{E}_{q_\phi(w|x)}
    [\log q_\psi(y|w)].
    \label{eq:dec_loss}
\end{align}
Minimizing this over $\psi$ drives $q_\psi(y|w)$ toward the optimal
decoder $p(y|w)$: by the standard cross-entropy decomposition
\cite{bishop2006pattern}, $\mathcal{L}_{\mathrm{dec}}(\psi) =
\mathbb{E}_{q_\phi(w)}[D_{\mathrm{KL}}(p(y|w)\|q_\psi(y|w))] +
H(Y|W)$, so the unique minimizer is $q_\psi(y|w)=p(y|w)$.

\subsection{Supervised Learning}
\label{sec:supervised}

In the fully supervised setting, labeled pairs $(x,y)$ are available
for all training examples.
Two formulations of the IB objective \eqref{eq:opt} are considered:
the CEB of Fischer
\cite{fischer2020conditional} and the VIB of Alemi et al.\ \cite{alemi2017deep}.
Both use variational bounds for practical optimization in their
original papers.
Here the CEB objective is estimated via minibatch class-conditional
marginals rather than a learned backward encoder, avoiding an
additional variational approximation.
Standard cross-entropy training of the encoder optimizes
$\mathcal{L}_{\mathrm{pred}}$ alone, which upper-bounds the
distortion $I(X;Y|W)$ \cite{alemi2017deep}, with no
$\mathcal{L}_{\mathrm{shape}}$ term. The encoder is not penalized for
retaining redundant information about $X$.

\subsubsection{CEB: Conditional Entropy Bottleneck}
\label{sec:ceb}

By the chain rule for mutual information
(see Appendix~\ref{app:lagrangian_deriv}), the IB Lagrangian of
\eqref{eq:opt} is equivalent, up to the constant $\beta\,I(X;Y)$, to
\begin{align}
    \mathcal{L}_{\mathrm{enc}}^{\text{\textsc{ceb}}}(\phi)
    := \underbrace{\beta\,I(X;W|Y)}_{\mathcal{L}_{\mathrm{pred}}}
    + \underbrace{(1-\beta)\,I(X;W)}_{\mathcal{L}_{\mathrm{shape}}}.
    \label{eq:lagrangian_prime}
\end{align}
$\mathcal{L}_{\mathrm{pred}} = \beta\,I(X;W|Y)$ is the
\textit{conditional rate} \cite{wyner1976rate,fischer2020conditional}: the
information $W$ retains about $X$ beyond what $Y$ already reveals,
minimized when $W$ is a sufficient statistic for $Y$.
$\mathcal{L}_{\mathrm{shape}} = (1-\beta)\,I(X;W)$ is the encoding rate:
for $\beta\geq\beta_c\geq 1$ its coefficient is negative, so it acts as a
regularizer that maximizes the encoding rate, driving the encoder away
from the trivial solution while the prediction term enforces
sufficiency.
The objective is bounded since $I(X;W)\leq H(X)$.

Fischer \cite{fischer2020conditional} optimizes \eqref{eq:lagrangian_prime}
via variational bounds, introducing a learned backward encoder
$b(w|y)$ to approximate $p(w|y)$.
Here both terms are instead estimated from minibatch class-conditional
marginals, requiring no additional learned approximation.
The conditional rate decomposes as
\begin{align}
    I(X;W|Y) = \mathbb{E}_{p(x,y)}\left[D_{\mathrm{KL}}
    \bigl(q_\phi(w|x)\,\|\,q_\phi(w|y)\bigr)\right],
    \label{eq:ixw_given_y}
\end{align}
where
\begin{align}
    q_\phi(w|y) := \mathbb{E}_{p(x|y)}[q_\phi(w|x)]
    \label{eq:qwy}
\end{align}
is the class-conditional marginal of the encoder, estimated per
minibatch as
\begin{align}
    \hat{q}_\phi(w|y) := \frac{1}{N_y}\sum_{x_i:\,y_i=y} q_\phi(w|x_i),
    \label{eq:qwy_hat}
\end{align}
where $N_y$ is the number of labeled examples with label $y$ in the
minibatch. This is an unbiased empirical estimate of \eqref{eq:qwy}
under the i.i.d.\ training data assumption.
The total rate satisfies
\begin{align}
    I(X;W) = \mathbb{E}_{p(x)}\left[D_{\mathrm{KL}}
    \bigl(q_\phi(w|x)\,\|\,q_\phi(w)\bigr)\right],
    \label{eq:compression}
\end{align}
where $q_\phi(w):=\mathbb{E}_{p(x)}[q_\phi(w|x)]$ is estimated from
the minibatch as a mixture.
Evaluating the KL requires computing
$\mathbb{E}_{q_\phi(w|x)}[\log q_\phi(w)]$. Since $q_\phi(w)$ is a
finite mixture, this expectation has no closed form
\cite{bishop2006pattern} and is estimated by drawing samples $w\sim q_\phi(w|x)$ and
evaluating $\log q_\phi(w)$ at each sample.

The CEB supervised encoder loss is therefore
\begin{align}
    \mathcal{L}_{\mathrm{enc}}^{\text{\textsc{ceb}}}(\phi)
    = \underbrace{\beta\,\mathbb{E}_{p(x,y)}\left[D_{\mathrm{KL}}
    \bigl(q_\phi(w|x)\,\|\,q_\phi(w|y)\bigr)\right]}_{\mathcal{L}_{\mathrm{pred}}}
    + \underbrace{(1-\beta)\,\mathbb{E}_{p(x)}\left[D_{\mathrm{KL}}
    \bigl(q_\phi(w|x)\,\|\,q_\phi(w)\bigr)\right]}_{\mathcal{L}_{\mathrm{shape}}}.
    \label{eq:sup_loss}
\end{align}

\subsubsection{VIB: Variational Information Bottleneck}
\label{sec:vib}

VIB \cite{alemi2017deep} uses a Gaussian encoder
$q_\phi(w|x)=\mathcal{N}(\mu_\phi(x),\mathrm{diag}\,\sigma^2_\phi(x))$
with $W\in\mathbb{R}^d$ and a fixed prior $a(w):=\mathcal{N}(0,I)$ to upper-bound $I(X;W)$ as
\begin{align}
    I(X;W) \;\leq\;
    \mathbb{E}_{p(x)}\bigl[D_{\mathrm{KL}}(q_\phi(w|x)\,\|\,a(w))\bigr].
    \label{eq:vib_kl}
\end{align}
The VIB encoder and decoder are trained jointly using the
loss\footnote{We follow the convention of
\cite{tishby1999information}, with $\beta$ on the prediction term.
\cite{alemi2017deep} place $\beta$ on the KL term, related by
$\beta_{\text{Alemi}}=1/\beta$.}
\begin{align}
    \mathcal{L}^{\text{\textsc{vib}}}(\phi,\psi)
    := \underbrace{\beta(-\,\mathbb{E}_{p(x,y)}\mathbb{E}_{q_\phi(w|x)}
    [\log q_\psi(y|w)])}_{\mathcal{L}_{\mathrm{pred}}} 
    + \underbrace{\mathbb{E}_{p(x)}\bigl[D_{\mathrm{KL}}
    (q_\phi(w|x)\,\|\,a(w))\bigr]}_{\mathcal{L}_{\mathrm{shape}}},
    \label{eq:vib_loss}
\end{align}
where $\mathcal{L}_{\mathrm{pred}}$ takes the same form as the cross-entropy decoder
loss \eqref{eq:dec_loss}.

CEB and VIB differ in how they estimate the same IB objective.
CEB uses minibatch marginals, targeting the true mutual information
without a fixed prior. The cost is sampling to evaluate the mixture
KL.
VIB uses closed-form KL bounds, but the
fixed $\mathcal{N}(0,I)$ prior biases the encoder, making
$\mathcal{L}_{\mathrm{shape}}$ an upper bound on $I(X;W)$ with the gap
acting as an uncontrolled regularization beyond the IB objective.
Tschannen et al.\ \cite{tschannen2020mutual} show that such
prior-induced regularization can dominate performance, making empirical
gains from VIB difficult to attribute to compression alone.

\subsection{Semi-Supervised Learning with SIGReg}
\label{sec:semi_supervised}

In the semi-supervised setting \cite{grandvalet2004semi,kingma2014semi},
labeled pairs $(x,y)$ are scarce while unlabeled data is abundant.
Two failure modes arise: training on labeled data alone yields a biased
encoder, as the labeled subset rarely covers the full support of $p(Y|X)$.
Training on unlabeled data without a sufficiency constraint yields the
trivial solution.
Meaningful training on all available data therefore requires a
regularizer that prevents collapse while remaining agnostic to the
missing labels.

In the absence of label information, the least biased choice is the
maximum entropy distribution on $\mathcal{P}_K$ \eqref{eq:flat_dirichlet},
which imposes no preference over predictive classes.
SIGReg \eqref{eq:sigreg} implements this prior as a distributional
regularizer on the encoder output (see Section~\ref{sec:sig_connection}).
The semi-supervised encoder loss is
\begin{align}
    \mathcal{L}_{\mathrm{enc}}^{\mathrm{semi}}(\phi) :=
    \underbrace{\mathbb{E}_{q(x,y)}\left[D_{\mathrm{KL}}
    \bigl(q_\phi(w|x)\,\|\,q_\phi(w|y)\bigr)\right]}_{\mathcal{L}_{\mathrm{pred}}\;\text{(labeled data only)}}
    + \underbrace{\lambda\,\mathcal{L}_{\mathrm{SIGReg}}(q_\phi(w))}_{\mathcal{L}_{\mathrm{shape}}}.
    \label{eq:semi_loss}
\end{align}
The first term targets the conditional rate $I(X;W|Y)$, estimated from
the labeled distribution $q(x,y)$, and encourages the encoder to discard non-predictive
information.
The second term applies SIGReg to all data, preventing collapse.
The trade-off coefficient $\lambda\geq 0$ balances the two:
too small and the encoder overfits the labeled subset,
too large and the distributional prior dominates,
washing out class-discriminative structure.

\subsection{Self-Supervised Learning with SIGReg}
\label{sec:self_supervised}

In the self-supervised setting, no labels are available
during encoder training.
Without any sufficiency signal, minimizing $I(X;W)$ alone yields the
trivial solution.
The sufficiency signal is instead provided by a view-prediction loss \cite{blum1998combining,chen2020simple}:
given augmented views $\{x_v\}$ of the same input, the encoder
is trained so that the embeddings $\{f_\phi(x_v)\}$ are close to
a joint global view, replacing the conditional rate $I(X;W|Y)$ with a proxy
based on shared semantic content \cite{balestriero2025lejepa}.

The self-supervised encoder loss is
\begin{align}
    \mathcal{L}_{\mathrm{enc}}^{\mathrm{self}}(\phi) :=
    \underbrace{\mathbb{E}_{p(x)}\bigl[
        D_{\mathrm{V}}(\{q_\phi(w|x_v)\})
    \bigr]}_{\mathcal{L}_{\mathrm{pred}}}
    + \underbrace{\lambda\,\mathcal{L}_{\mathrm{SIGReg}}(q_\phi(w))}_{\mathcal{L}_{\mathrm{shape}}},
    \label{eq:self_loss}
\end{align}
where $D_{\mathrm{V}}$ penalizes spread among the view embeddings \cite{balestriero2025lejepa}, and $\mathcal{L}_{\mathrm{SIGReg}}$ prevents collapse
by enforcing the distributional prior on all data
(see Section~\ref{sec:sig_connection}).
The trade-off coefficient $\lambda\geq 0$ balances prediction against
regularization: too small and the encoder ignores distributional
structure. Too large and the prior dominates, washing out
view-discriminative content.
Since no labels are used, the encoder training is arguably task-agnostic.

\section{Experiments}
\label{sec:experiments}

Four experiments are presented.
The first two are toy problems where the ground-truth predictive
manifold $\mathcal{M}=\{p(Y|x):x\in\mathcal{X}\}$ can be visualised
on the probability simplex $\mathcal{P}_3$. The encoder is a Dirichlet
distribution $W\sim\mathrm{Dir}(\alpha(x))$ and $\beta$ is swept to
trace the IB curve.
The third applies the minibatch-marginal CEB estimator to FashionMNIST
and compares it with the standard VIB prior bound.
The fourth ablates over the simplex dimension~$K$, confirming that it
is the only structural parameter and is easy to set.
Every $\beta$ sweep uses the identical grid
$\beta\in\{0.5,5,10,25,50,100,250\}$, with $\beta=0.5$ included as a
deliberate sub-critical probe: by the phase transition
\eqref{eq:beta_c}, $\beta<1$ lies at or below $\beta_c$ and should
drive the encoder toward the trivial solution, which is verified
empirically in all three settings.
All reported accuracy, rate, and distortion values in
Sections~\ref{sec:exp_mnist}--\ref{sec:exp_k_ablation} are from a
single training run per configuration (seed $42$).

\subsection{Continuous input}
\label{sec:exp_cont}

The input is an angle $X=\theta\sim\mathrm{Uniform}[0,2\pi)$ and $Y\in\{0,1,2\}$.  The ground-truth conditional
$p(Y=y|x)\propto\exp(\cos(x-2\pi y/3))$ traces a smooth
\emph{loop} through the simplex (see Figure~\ref{fig:cont_manifold}).
Because $p(Y|x)$ is injective and $x$ is continuous, the rate
$I(X;W^*)=\infty$, so the IB curve extends to infinite rate.

The standard IB Lagrangian
$\mathcal{L}=I(X;W)+\beta\,I(X;Y|W)$ is minimised using the known
$p(Y|X)$.
As $\beta$ increases, the Dirichlet means progressively recover the
full predictive loop (see Figure~\ref{fig:cont_sweep}). The
rate--distortion trade-off is shown in Figure~\ref{fig:cont_info}.

\begin{figure}[h]
\centering
\includegraphics[width=0.35\textwidth]{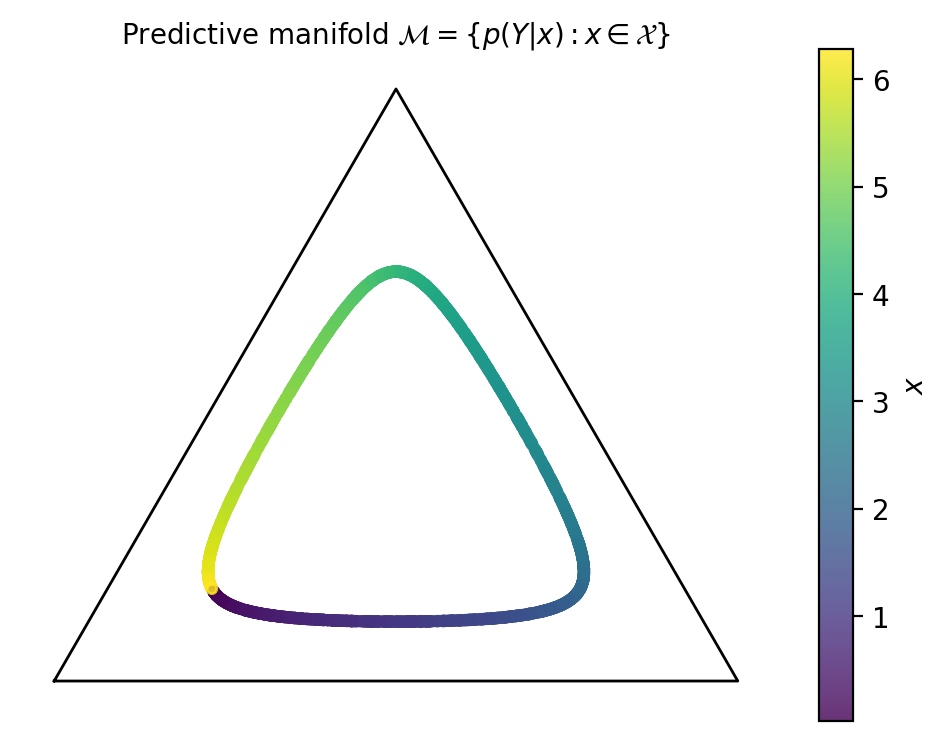}
\caption{Continuous predictive manifold: $p(Y|x)$ traces a loop through
$\mathcal{P}_3$ as $\theta$ varies over $[0,2\pi)$.}
\label{fig:cont_manifold}
\end{figure}

\begin{figure}[h]
\centering
\includegraphics[width=\textwidth]{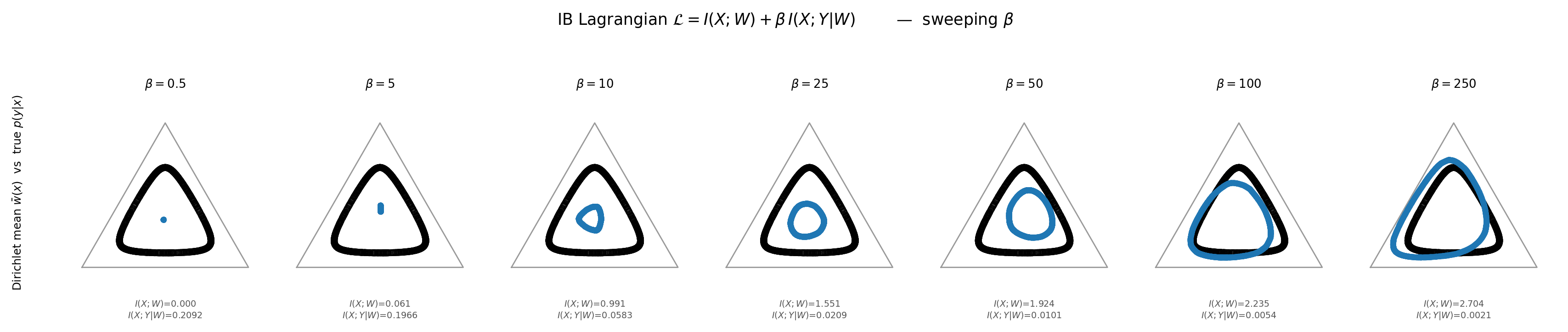}
\caption{Continuous experiment: Dirichlet means (blue) vs.\ ground-truth
$p(Y|x)$ (black circles) for increasing $\beta$.}
\label{fig:cont_sweep}
\end{figure}

\begin{figure}[h]
\centering
\includegraphics[width=\textwidth]{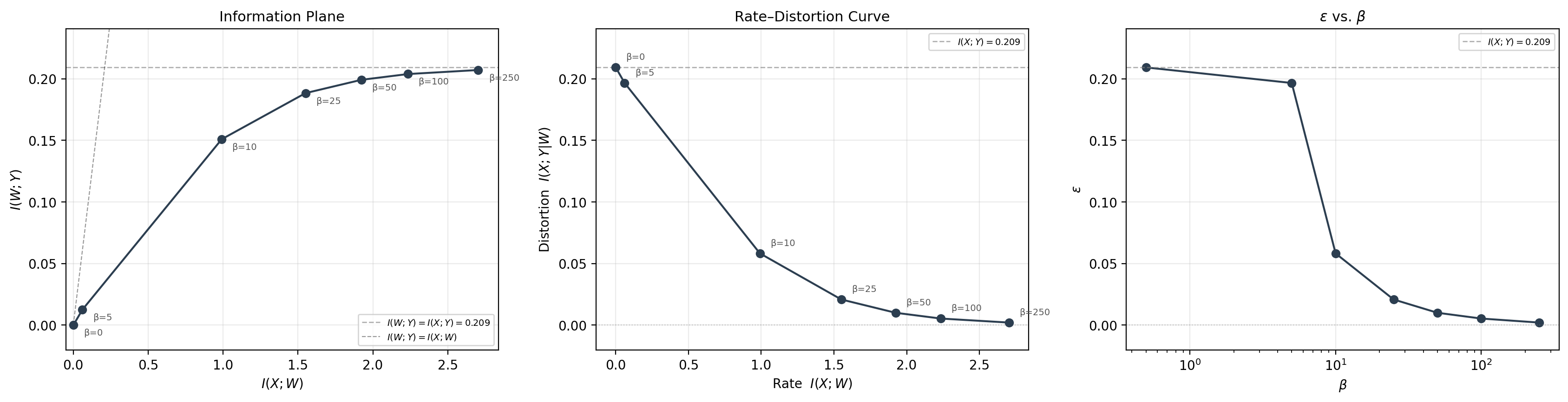}
\caption{Continuous experiment: Information Plane, Rate--Distortion curve,
and $\varepsilon(\beta)$.}
\label{fig:cont_info}
\end{figure}

\subsection{Discrete input}
\label{sec:exp_disc}

Here $X\in\{0,\ldots,19\}$ is uniform, with pairs of consecutive values
sharing the same conditional ($Y=\lfloor X/2\rfloor$ gives 10 clusters).
The 10 distributions $p(Y|x)$ are placed at equally-spaced angles on the
same circular contour used in Section~\ref{sec:exp_cont} (see Figure~\ref{fig:disc_manifold}), giving
$H(X)=\log 20\approx 3.00$ but $H(W^*)=\log 10\approx 2.30$.

The CEB Lagrangian
$\mathcal{L}=(1{-}\beta)\,I(X;W)+\beta\,I(X;W|Y)$ is estimated
non-parametrically via leave-one-out (LOO) log-mixture estimators.
No knowledge of $p(Y|X)$ enters the loss.
As $\beta$ grows the encoder separates the 10 clusters
(see Figure~\ref{fig:disc_sweep}). The rate saturates near
$H(W^*)=\log 10$, well below $H(X)$ (see Figure~\ref{fig:disc_info}):
the encoder discards the redundant bit that distinguishes the two
$X$-values within each cluster.

The learned Dirichlet means do not coincide with the ground-truth
simplex points. They form a different partition of the simplex that
is nonetheless \emph{equivalent} for prediction.
The IB objective only requires $W$ to be a sufficient statistic for $Y$.
Any injective map from the 10 clusters into $\mathcal{P}_3$ achieves
$I(X;Y|W)=0$ and is therefore a global optimum.
The particular embedding is a gauge freedom of the loss
(see Remark~\ref{rem:gauge_decoder}), so different random seeds
(or $\beta$ values) land on different but equally valid configurations.

\begin{figure}[h]
\centering
\includegraphics[width=0.35\textwidth]{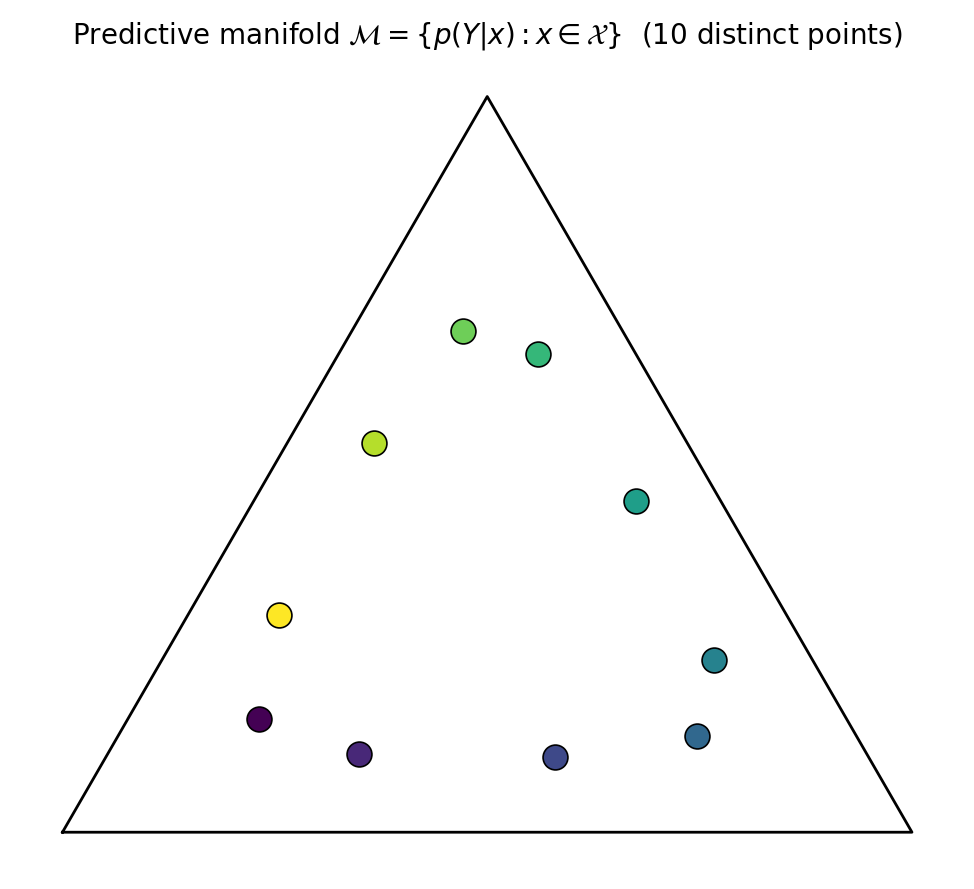}
\caption{Discrete predictive manifold: 10 distinct $p(Y|x)$ points on
the simplex loop.}
\label{fig:disc_manifold}
\end{figure}

\begin{figure}[h]
\centering
\includegraphics[width=\textwidth]{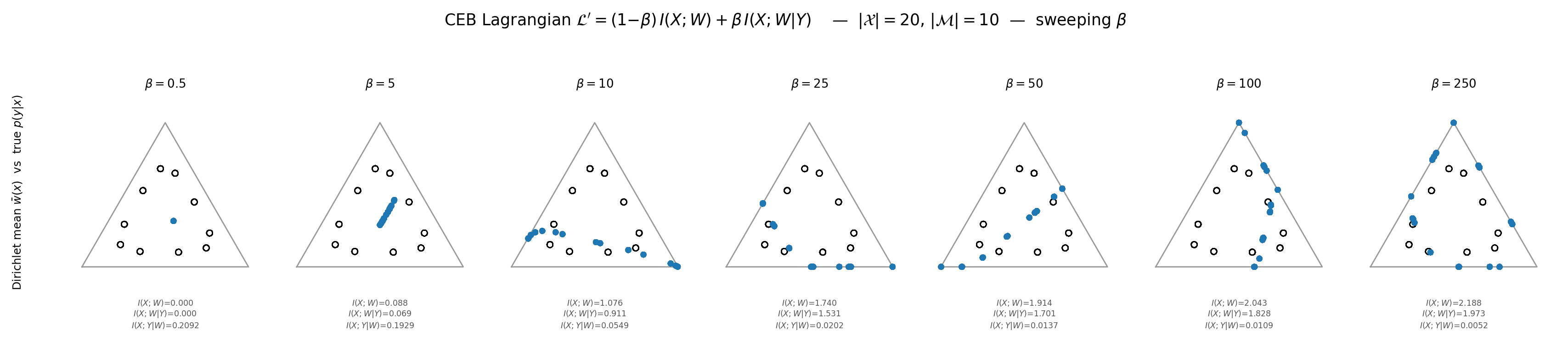}
\caption{Discrete CEB experiment: Dirichlet means vs.\ ground-truth
for increasing $\beta$.}
\label{fig:disc_sweep}
\end{figure}

\begin{figure}[h]
\centering
\includegraphics[width=\textwidth]{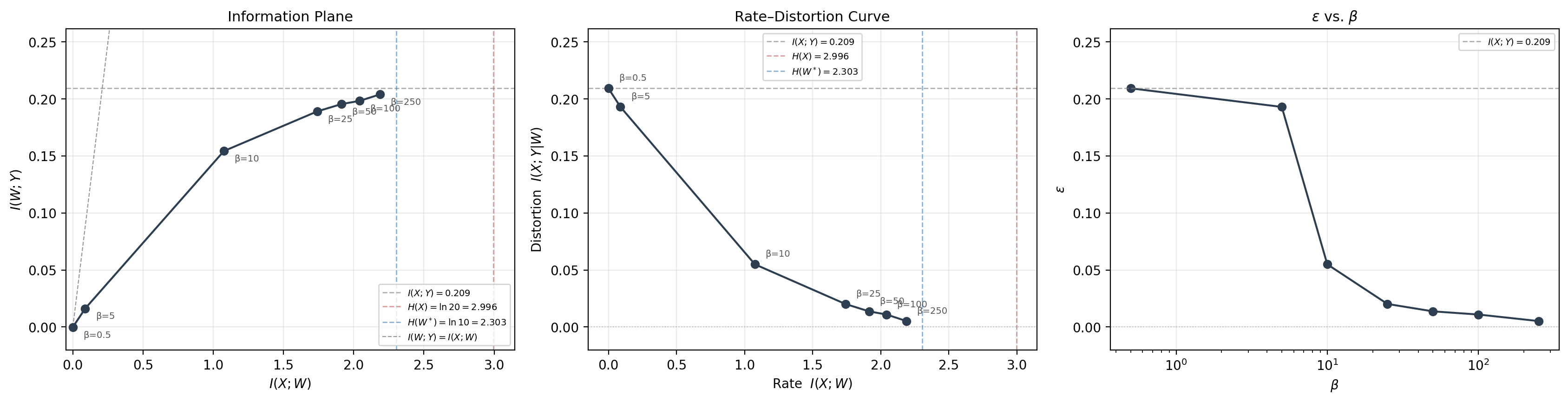}
\caption{Discrete CEB experiment: Information Plane, Rate--Distortion
curve, and $\varepsilon(\beta)$.  Dashed lines mark $H(X)$ and $H(W^*)$.}
\label{fig:disc_info}
\end{figure}

\subsection{VIB vs.\ CEB}
\label{sec:exp_mnist}

The standard VIB~\cite{alemi2017deep} is compared with the CEB formulation on
FashionMNIST ($28{\times}28$ grayscale, 10 classes), sweeping
$\beta\in\{0.5,5,10,25,50,100,250\}$.
Both methods share the same convolutional backbone. They differ in
the latent space, the loss, and the assumptions required.

\textbf{VIB} uses a Gaussian encoder
$q_\phi(w|x)=\mathcal{N}(\mu_\phi(x),\mathrm{diag}\,\sigma^2_\phi(x))$
with $W\in\mathbb{R}^{16}$ and a learned multilayer perceptron (MLP)
decoder $q_\psi(y|w)$.  The loss is
\begin{align}
    \mathcal{L}^{\text{\textsc{vib}}}(\phi)
    = \underbrace{\beta\bigl(-\,\mathbb{E}_{p(x,y)}\mathbb{E}_{q_\phi(w|x)}
    [\log q_\psi(y|w)]\bigr)}_{\mathcal{L}_{\mathrm{pred}}}
    + \underbrace{\mathbb{E}_{p(x)}\bigl[D_{\mathrm{KL}}
    \bigl(q_\phi(w|x)\,\big\|\,a(w)\bigr)\bigr]}_{\mathcal{L}_{\mathrm{shape}}},
\end{align}
where $a(w)=\mathcal{N}(0,I)$, cf.\ \eqref{eq:vib_loss}.

\textbf{CEB} uses a Dirichlet encoder
$q_\phi(w|x)=\mathrm{Dir}(\alpha_\phi(x))$ with
$W\in\mathcal{P}_{10}$, the 10-class probability simplex.
The loss is~\eqref{eq:sup_loss},
\begin{align}
    \mathcal{L}_{\mathrm{enc}}^{\text{\textsc{ceb}}}(\phi)
    = \underbrace{\beta\,\mathbb{E}_{p(x,y)}\!\left[D_{\mathrm{KL}}
    \bigl(q_\phi(w|x)\,\|\,q_\phi(w|y)\bigr)\right]}_{\mathcal{L}_{\mathrm{pred}}}
    + \underbrace{(1{-}\beta)\,\mathbb{E}_{p(x)}\!\left[D_{\mathrm{KL}}
    \bigl(q_\phi(w|x)\,\|\,q_\phi(w)\bigr)\right]}_{\mathcal{L}_{\mathrm{shape}}},
\end{align}
where $q_\phi(w|y)$ and $q_\phi(w)$ are the class-conditional and
marginal distributions estimated from minibatch LOO log-mixture
estimators.
Unlike VIB, this requires no fixed prior, no decoder during
training, and no variational bound. The loss targets the true mutual
information directly.
The \emph{only free parameter} beyond the shared backbone is the
number of components of $W$ (here $K{=}10$, matching the number of
classes).

Because the CEB loss is purely information-theoretic, the encoder
exhibits gauge freedom (see Remark~\ref{rem:gauge_decoder}): the learned
simplex coordinates need not align with class labels.
Classification accuracy is evaluated by training a small post-hoc
readout MLP ($10\to 256\to 10$) on the frozen encoder means, separating representation learning from
label alignment.

Figure~\ref{fig:mnist_plots} shows the results.
CEB matches or exceeds VIB at every operating point from $\beta{=}5$
onward, peaking at ${\sim}91.4\%$ accuracy ($\beta{=}25$) versus
VIB's best of ${\sim}90.7\%$ ($\beta{=}250$), while maintaining a
consistently lower rate.
For a fair rate comparison, $I(X;W)$ is reported for both methods
using the same minibatch LOO log-mixture estimator applied to each
encoder's own distribution family (Dirichlet for CEB, Gaussian for
VIB). VIB's variational KL-to-prior upper bound is not used, as it
is not comparable to CEB's non-parametric estimate.

\medskip
\noindent\textbf{Relation to the phase transition.}
The sweep $\beta\in\{0.5,5,10,25,50,100,250\}$ is designed to span
the critical value $\beta_c$ of \eqref{eq:beta_c}.
FashionMNIST is a near-deterministic classification task: $Y$ is
almost a function of $X$, so $H(W^*|Y)\approx 0$ and
$\beta_c\approx 1$.
Consistent with this, $\beta=0.5$ lies at or below the phase
transition and CEB performs poorly there. $\beta\geq 5$ sits
comfortably above $\beta_c$ and yields the non-trivial operating
points reported above.
Two mechanisms prevent the minibatch-marginal estimator from
collapsing at usable $\beta$.
First, the LOO log-mixture estimator of $I(X;W)$ uses an
$(N{-}1)$-component mixture rather than the true marginal, biasing
the estimated compression cost and softening the pull toward the
trivial point.
Second, the stochastic Dirichlet encoder acts as implicit
regularization, keeping $H(W|X)$ bounded away from zero and
preventing degenerate deterministic solutions.

\begin{figure}[h]
\centering
\includegraphics[width=\textwidth]{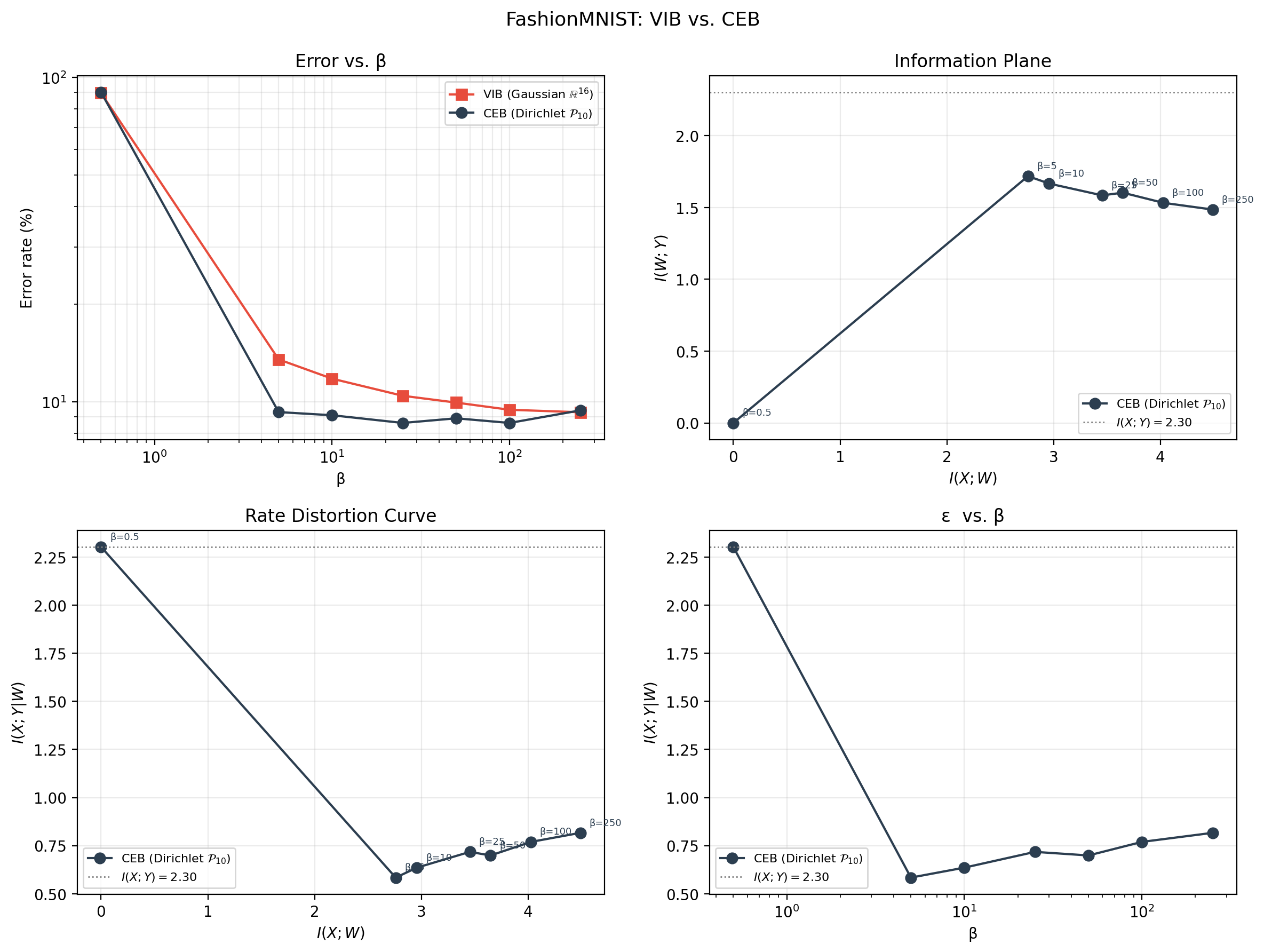}
\caption{FashionMNIST: VIB (Gaussian, $\mathbb{R}^{16}$) vs.\ CEB
(Dirichlet, $\mathcal{P}_{10}$).
\emph{Top-left:} error rate vs.\ $\beta$.
\emph{Top-right:} information plane $I(W;Y)$ vs.\ $I(X;W)$ (CEB only).
\emph{Bottom-left:} rate--distortion curve $I(X;Y|W)$ vs.\ $I(X;W)$ (CEB only).
\emph{Bottom-right:} distortion $\varepsilon=I(X;Y|W)$ vs.\ $\beta$ (CEB only).
CEB achieves higher accuracy at lower rate with no parametric assumptions
beyond the dimension of $W$.}
\label{fig:mnist_plots}
\end{figure}

\subsection{Ablation over the simplex dimension $K$}
\label{sec:exp_k_ablation}

The CEB encoder has a single structural parameter: the simplex
dimension~$K$.  For a deterministic classification task such as
FashionMNIST, the true predictive manifold
$\mathcal{M}=\{p(Y|x):x\in\mathcal{X}\}$ consists of $|\mathcal{Y}|=10$
point masses (the simplex vertices $e_0,\ldots,e_9$), so the effective
degrees of freedom equal the number of classes.
We therefore expect the encoder to need at least $K=|\mathcal{Y}|$
components and to gain little from $K>|\mathcal{Y}|$.

$\beta=25$ (the best CEB operating point from
Section~\ref{sec:exp_mnist}) is fixed and $K\in\{3,5,10,15,20\}$
is swept.
All other settings match Section~\ref{sec:exp_mnist}: the same
convolutional backbone, batch size ($2048$), optimizer (Adam,
$\mathrm{lr}{=}10^{-3}$), encoder training ($90$ epochs), and
post-hoc readout ($30$ epochs, same architecture). $K$ is the only
quantity that varies.
Figure~\ref{fig:k_ablation} shows the results.
Accuracy rises from $K{=}3$ ($88.7\%$) to $K{=}10$ ($90.9\%$)
and plateaus for $K>10$ ($91.2\%$ at $K{=}15$ and $K{=}20$).
Distortion $I(X;Y|W)$ continues to decrease with $K$, reflecting the
additional capacity, but the predictive benefit saturates once $K$
reaches the number of classes.
$K=|\mathcal{Y}|$ is a natural default: sufficient and near-optimal.

\medskip
\noindent\textbf{Behaviour for $K<|\mathcal{Y}|$.}
The only sub-$|\mathcal{Y}|$ point in the sweep is $K{=}3$, which
gives $88.7\%$ accuracy, a noticeable drop of roughly $2$\,pp
from the $K{=}10$ value of $90.9\%$, but not a total collapse.
This is consistent with our framing: once $K$ falls below the number
of classes, multiple predictive clusters must share simplex vertices
and some class information is necessarily lost, but the remaining
$K$ vertices can still carry most of the label signal, so the
degradation is gradual rather than catastrophic.
$K{=}1$ and $K{=}2$ are excluded from the sweep as their behaviour
is determined a priori: $\mathcal{P}_1$ is a single point, so
$W\perp X$ by construction and accuracy is exactly chance ($10\%$).
$\mathcal{P}_2$ supports at most one non-trivial split and cannot
separate ten classes.
The trend $K{=}3\to 5\to 10$ and the plateau at $K>|\mathcal{Y}|$
together confirm that $K=|\mathcal{Y}|$ is the natural default.

\begin{figure}[h]
\centering
\includegraphics[width=\textwidth]{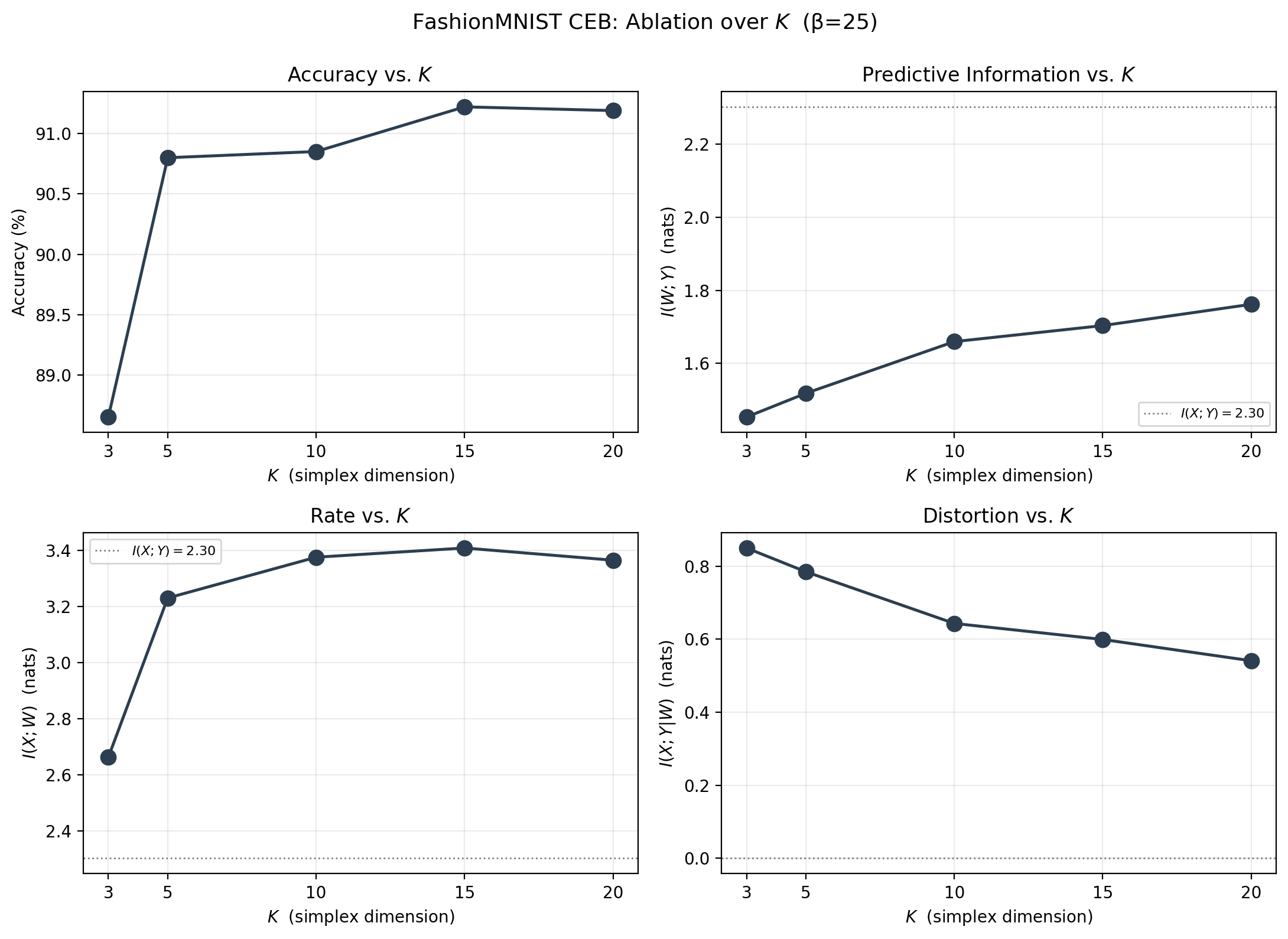}
\caption{CEB ablation over simplex dimension $K$ at $\beta=25$.
Accuracy saturates at $K{=}|\mathcal{Y}|{=}10$.
Further increasing $K$ reduces distortion but does not improve
classification.}
\label{fig:k_ablation}
\end{figure}

%%%%%%%%%%%%%%%%%%%%%%%%%%%%%%%%%%%%%%%%%%%%%%%%%%%%%%%%%%%%

\section{Conclusion}

We presented a geometric framework for representation learning
grounded in the Information Bottleneck, showing that the optimal
encoder at any distortion level performs a soft clustering of the
predictive manifold $\mathcal{M}\subset\mathcal{P}_K$.
The Dirichlet-to-Gaussian chain connects the maximum entropy prior on
the simplex to the isotropic Gaussian enforced by SIGReg, identifying
it as a Gaussian relaxation with a non-diminishing entropy overhead
that affects rate accounting but not achievable prediction.
This relaxation provides a principled distributional regularizer for
semi-supervised and self-supervised learning.
The Conditional Entropy Bottleneck decomposition yields practical
encoder losses for supervised and semi-supervised settings. In the
self-supervised setting, SIGReg serves as the regularizer alongside a
view-prediction proxy.

Several directions remain open.
The continuous-$Y$ bound (see Proposition~\ref{prop:geom_dim}) relies on
asymptotic covering-number dimension. Replacing this with a more
operational notion would strengthen the result.
The exponential parameterization offers a near-exact alternative to
the Gaussian with a vanishing overhead, and its practical viability
deserves empirical investigation.
Finally, scaling the non-parametric CEB estimator to larger datasets
and higher-dimensional tasks would test the limits of the minibatch
marginal approach.

%%%%%%%%%%%%%%%%%%%%%%%%%%%%%%%%%%%%%%%%%%%%%%%%%%%%%%%%%%%%

\bibliographystyle{plain}
\bibliography{main}

%%%%%%%%%%%%%%%%%%%%%%%%%%%%%%%%%%%%%%%%%%%%%%%%%%%%%%%%%%%%

\appendix

\section*{Appendices}

\section{Effective Dimension in the Continuous Case}
\label{app:eff_dim}

When $Y$ is continuous, $p(Y|x)$ is a density and the predictive
manifold $\mathcal{M}=\{p(Y|x):x\in\mathcal{X}\}$ is a subset of an
infinite-dimensional function space.\footnote{We continue to call
$\mathcal{M}$ a ``manifold'' loosely: it is the image of
$x\mapsto p(Y|x)$ and may have singularities, self-intersections, or
varying dimension. Only the metric-space structure $(\mathcal{M},d)$
is used in what follows.}
The effective dimension of $\mathcal{M}$ is finite under mild regularity
conditions on $f^*\colon x\mapsto p(Y|x)$ and the input space
$\mathcal{X}$.

Let $(S,d)$ be a metric space, where $d:S\times S\to\mathbb{R}_{\geq 0}$
is a distance function.
The \textit{covering number} $N(\delta,S,d)$ is the smallest number of
$d$-balls of radius $\delta$ needed to cover $S$.
The \textit{effective dimension} of $S$ \cite{kolmogorov1961epsilon}
is the exponent governing the growth of the covering
number,\footnote{The term \textit{effective dimension} has been used
in several disjoint senses in the literature. Here we use it as our
name for this covering-number-based variant (equivalently, the upper
box-counting / Minkowski dimension), in the spirit of
Kolmogorov--Tikhomirov's $\varepsilon$-entropy
\cite{kolmogorov1961epsilon}.}
\begin{align}
    d_{\mathrm{eff}}(S) := \lim_{\delta\to 0}
    \frac{\log N(\delta,\,S,\,d)}{\log(1/\delta)},
    \label{eq:eff_dim_def}
\end{align}
when this limit exists (or the corresponding $\limsup$).

\begin{proposition}[Geometric bound {\cite[\S 3.2]{falconer2003fractal}}]
\label{prop:geom_dim}
Suppose the support of $p(X)$ has intrinsic dimension $d_X$, meaning
$\log N(\delta,\,\mathrm{supp}(p_X),\,\|\cdot\|)
= \Theta(d_X\log(1/\delta))$.
If $f^*$ is $L$-Lipschitz with respect to a metric $\rho$ on the space
of distributions (e.g.\ Hellinger distance $d_H$ or total variation),
\begin{align}
    \rho\bigl(p(Y|x_1),\,p(Y|x_2)\bigr)
    \leq L\,\|x_1-x_2\|,
\end{align}
then $d_{\mathrm{eff}}(\mathcal{M})\leq d_X$.
\end{proposition}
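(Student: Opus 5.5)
The approach is to push a covering of the input support forward through $f^*$ and compare covering numbers. Fix $\delta>0$ and write $S_X:=\mathrm{supp}(p_X)$ and $\mathcal{M}=f^*(S_X)$, equipped with the metric $\rho$. Take a minimal cover of $S_X$ by $N(\delta/L,S_X,\|\cdot\|)$ norm balls of radius $\delta/L$, with centers $x_1,\dots,x_n$. If we use a cover whose centers need not lie in $S_X$, we first replace each ball by one of radius $2\delta/L$ centered at a point of $S_X$ inside it. This costs only a constant factor in the radius, which is irrelevant in the limit.

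For any $x\in S_X$ there is a center $x_i$ with $\|x-x_i\|\le\delta/L$. The Lipschitz hypothesis then gives $\rho(f^*(x),f^*(x_i))\le L\,\|x-x_i\|\le\delta$. Hence the $\rho$-balls of radius $\delta$ around $f^*(x_1),\dots,f^*(x_n)$ cover $\mathcal{M}$, and
\begin{align}
    N(\delta,\mathcal{M},\rho)\;\le\;N(\delta/L,\,S_X,\,\|\cdot\|).
\end{align}

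Next we take logarithms and use the hypothesis $\log N(\eta,S_X,\|\cdot\|)\le C\,d_X\log(1/\eta)$ for small $\eta$, where $C\to1$ is implicit in the $\Theta$. Setting $\eta=\delta/L$ yields
\begin{align}
    \frac{\log N(\delta,\mathcal{M},\rho)}{\log(1/\delta)}
    \;\le\;\frac{\log N(\delta/L,S_X,\|\cdot\|)}{\log(1/\delta)+\log L}\cdot\frac{\log(1/\delta)+\log L}{\log(1/\delta)}.
\end{align}
As $\delta\to0$ the second factor tends to $1$ and the first has $\limsup$ at most $d_X$. This gives $d_{\mathrm{eff}}(\mathcal{M})\le d_X$, read as a $\limsup$ if the limit fails to exist. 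This is exactly the standard fact that Lipschitz images do not increase upper box dimension \cite[\S 3.2]{falconer2003fractal}.

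The main obstacle is interpreting the $\Theta(d_X\log(1/\delta))$ hypothesis precisely. Literally, $\Theta$ permits a multiplicative constant, which would only give $d_{\mathrm{eff}}(\mathcal{M})\le C d_X$. I would therefore read it as asserting that the upper box dimension of $S_X$ equals $d_X$, i.e.\ $\limsup_{\eta\to0}\log N(\eta,S_X)/\log(1/\eta)=d_X$. The argument needs only this $\limsup$ upper bound, and I would state that interpretation explicitly.

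The second point needing care is that $\rho$ must be a genuine metric on $\mathcal{M}$, such as Hellinger or total variation, so that covering numbers are well defined. The cover-transfer step itself is routine.
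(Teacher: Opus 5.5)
Your proposal is correct and follows the paper's proof: you push a $(\delta/L)$-cover of $\mathrm{supp}(p_X)$ through $f^*$ to get $N(\delta,\mathcal{M},\rho)\le N(\delta/L,\mathrm{supp}(p_X),\|\cdot\|)$ and then compare growth exponents. The paper's one-line argument glosses over three points that you handle explicitly: keeping the cover centers in the support, the $\log L$ shift in the exponent, and reading the $\Theta$ hypothesis as an upper box-dimension statement (a literal $\Theta$ would only give $d_{\mathrm{eff}}(\mathcal{M})\le C\,d_X$). Your treatment of these is the right way to make the argument rigorous.
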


\begin{proof}
Any $(\delta/L)$-cover of $\mathrm{supp}(p_X)$ maps under $f^*$ to a
$\delta$-cover of $\mathcal{M}$ in $\rho$.
Hence
\begin{align}
    \log N(\delta,\,\mathcal{M},\,\rho)
    \;\leq\;
    \log N\!\bigl(\delta/L,\,\mathrm{supp}(p_X),\,\|\cdot\|\bigr)
    \;=\;
    \Theta\!\bigl(d_X\log(1/\delta)\bigr),
\end{align}
so $d_{\mathrm{eff}}(\mathcal{M})\leq d_X$.
\end{proof}

The intrinsic dimension $d_X$ reflects the manifold hypothesis
\cite{fefferman2016testing,pope2021intrinsic}: even when $X$ lives in a
high-dimensional ambient space, $p(X)$ typically concentrates near a
lower-dimensional submanifold.
Lipschitz regularity of $f^*$ ensures this low dimension transfers to
the predictive manifold $\mathcal{M}$.

\begin{example}[Scalar Gaussian channel]
Let $Y=X+Z$ with $Z\sim\mathcal{N}(0,\sigma_Z^2)$ independent of $X$,
so $p(Y|x)=\mathcal{N}(x,\sigma_Z^2)$.
The predictive manifold is
$\mathcal{M}=\{\mathcal{N}(x,\sigma_Z^2):x\in\mathbb{R}\}$,
a one-parameter family (varying mean, fixed variance).
Since $X$ is scalar, $d_X=1$.
To verify the Lipschitz condition, note that
$d_H^2(p(Y|x_1),p(Y|x_2))=1-\exp(-(x_1{-}x_2)^2/8\sigma_Z^2)
\leq(x_1{-}x_2)^2/8\sigma_Z^2$,
so $d_H(p(Y|x_1),p(Y|x_2))\leq|x_1{-}x_2|/(2\sqrt{2}\,\sigma_Z)$:
$f^*$ is Lipschitz with constant $L=1/(2\sqrt{2}\,\sigma_Z)$.
Proposition~\ref{prop:geom_dim} gives $d_{\mathrm{eff}}(\mathcal{M})\leq d_X=1$,
hence $K=\lceil d_{\mathrm{eff}}(\mathcal{M})\rceil+1=2$ and $\mathcal{P}_2$ suffices.
Since $p(Y|x)$ is injective, $\mathcal{M}$ is a genuine 1D curve and
$\mathcal{P}_1$ (a single point) is insufficient: $K=2$ is tight.
\end{example}

\begin{remark}
For discrete $Y$, $K=|\mathcal{Y}|$ exactly. The joint bound is
$K\leq\min\bigl(|\mathcal{Y}|,\,\lceil d_X\rceil+1\bigr)$.
For continuous $Y$, the optimal representation admits a
reparameterization of dimension at most $d_X$ in general, since $X$
itself is sufficient.
The stronger statement $d_{\mathrm{eff}}(\mathcal{M})\leq d_X$,
and hence $K\leq\lceil d_X\rceil+1$ as defined, requires Lipschitz
regularity of $f^*$ (see Proposition~\ref{prop:geom_dim}).
\end{remark}

\section{Proof of Theorem~\ref{thm:convex}}
\label{app:convexity}

\begin{proof}
The argument is the standard time-sharing construction for convexity
of the rate-distortion function \cite[Theorem~10.2.1]{cover2006elements},
adapted to the IB distortion measure.
Let $\varepsilon_1,\varepsilon_2\in[0,I(X;Y)]$ and
$\lambda\in[0,1]$.
For any $\delta>0$, let $p_1(w|x)$ be a feasible encoder with
$I_1(X;Y|W)\leq\varepsilon_1$ and
$I_1(X;W)\leq R(\varepsilon_1)+\delta$, and
let $p_2(w|x)$ be a feasible encoder with
$I_2(X;Y|W)\leq\varepsilon_2$ and
$I_2(X;W)\leq R(\varepsilon_2)+\delta$.
Such encoders exist. For finite alphabets the minimum is achieved,
so one may take $\delta=0$.

Let $Q\sim\mathrm{Bernoulli}(\lambda)$ be independent of $(X,Y)$,
and define
\begin{align}
    W := \begin{cases}
    (1,W_1) & \text{if } Q=1, \\
    (0,W_2) & \text{if } Q=0,
    \end{cases}
\end{align}
where $W_i\sim p_i(w|x)$.
Since each $p_i(w|x)$ satisfies $Y-X-W_i$ and $Q\perp(X,Y)$, the
combined encoder satisfies $Y-X-W$: for $w=(q,w_q)$,
\begin{align}
    p(w|x,y) = p(q|x,y)\,p_q(w_q|q,x,y) = p(q|x)\,p_q(w_q|q,x) = p(w|x).
\end{align}

\textit{Rate decomposition.}
Since $Q\perp X$,
\begin{align}
    I(X;W) &= I(X;Q,W_Q) = I(X;Q) + I(X;W_Q|Q) \notag\\
    &= 0 + \lambda\,I_1(X;W_1) + (1-\lambda)\,I_2(X;W_2) \notag\\
    &\leq \lambda\,(R(\varepsilon_1)+\delta)
    + (1-\lambda)\,(R(\varepsilon_2)+\delta) \notag\\
    &= \lambda\,R(\varepsilon_1) + (1-\lambda)\,R(\varepsilon_2)
    + \delta.
    \label{eq:rate_ts}
\end{align}

\textit{Distortion decomposition.}
Since $W=(Q,W_Q)$,
\begin{align}
    I(X;Y|W) &= I(X;Y|Q,W_Q) \notag\\
    &= \lambda\,I_1(X;Y|W_1) + (1-\lambda)\,I_2(X;Y|W_2) \notag\\
    &\leq \lambda\,\varepsilon_1 + (1-\lambda)\,\varepsilon_2.
    \label{eq:dist_ts}
\end{align}

The time-shared encoder is feasible at distortion
$\bar{\varepsilon}=\lambda\varepsilon_1+(1-\lambda)\varepsilon_2$.
Its rate is at most $\lambda R(\varepsilon_1)+(1-\lambda)R(\varepsilon_2)+\delta$.
Since $R(\bar{\varepsilon})$ is the minimum rate at distortion
$\bar{\varepsilon}$,
\begin{align}
    R\bigl(\lambda\varepsilon_1+(1-\lambda)\varepsilon_2\bigr)
    \leq \lambda\,R(\varepsilon_1)+(1-\lambda)\,R(\varepsilon_2)
    +\delta.
\end{align}
Taking $\delta\to 0$ gives
\begin{align}
    R\bigl(\lambda\varepsilon_1+(1-\lambda)\varepsilon_2\bigr)
    \leq \lambda\,R(\varepsilon_1)+(1-\lambda)\,R(\varepsilon_2).
\end{align}

\textit{Monotonicity.}
The feasible set $\{p(w|x):I(X;Y|W)\leq\varepsilon\}$ grows with
$\varepsilon$, so the minimum rate cannot increase.

\textit{Concavity of $\Delta^*(R)$.}
Write $\Delta^*(R)=I(X;Y)-\varepsilon^*(R)$, where $\varepsilon^*(R)$
is the inverse of $R(\varepsilon)$.
Since $R(\varepsilon)$ is convex and non-increasing, $\varepsilon^*(R)$
is convex and non-increasing, so $\Delta^*(R)$ is concave and
non-decreasing.
\end{proof}

\section{Flat Portion of the IB Curve}
\label{app:flat_portion}

For any $R\in[H(W^*),H(X)]$, set $W=(W^*,U)$ with $U=g(X)$ for a
suitably chosen function $g$.
Since $W^*=f^*(X)$ is a function of $X$, the Markov chain $Y-X-W$ is
preserved.
The rate $I(X;W)\geq H(W^*)$ increases with the information $g$
retains about $X$, reaching $I(X;W)=H(X)$ at $g(X)=X$.

For the predictive information, the data processing inequality on
$W\to W^*$ gives $I(W;Y)\geq I(W^*;Y)=I(X;Y)$. The data processing
inequality on $Y-X-W$ gives $I(W;Y)\leq I(X;Y)$, so
$I(W;Y)=I(X;Y)$.
The IB curve is therefore flat at $I(X;Y)$ for all $R\in[H(W^*),H(X)]$.

\section{Proof of Proposition~\ref{prop:monotone_beta}}
\label{app:monotone_beta}

\begin{proof}
Let $\beta_1<\beta_2$ and let $p_1,p_2$ denote the corresponding
optimal encoders achieving $(R_1,\varepsilon_1)$ and
$(R_2,\varepsilon_2)$ respectively.
By optimality of $p_1$ at $\beta_1$,
\begin{align}
    R_1 + \beta_1\,\varepsilon_1 \leq R_2 + \beta_1\,\varepsilon_2.
\end{align}
By optimality of $p_2$ at $\beta_2$,
\begin{align}
    R_2 + \beta_2\,\varepsilon_2 \leq R_1 + \beta_2\,\varepsilon_1.
\end{align}
Adding these two inequalities gives
\begin{align}
    (\beta_2-\beta_1)\,\varepsilon_2
    \leq (\beta_2-\beta_1)\,\varepsilon_1.
\end{align}
Since $\beta_2>\beta_1$, dividing by $\beta_2-\beta_1>0$ yields
$\varepsilon_2\leq\varepsilon_1$, so $\varepsilon(\beta)$ is
non-increasing.

For the rate, the first optimality inequality gives
\begin{align}
    R_1 - R_2 \leq \beta_1\,(\varepsilon_2 - \varepsilon_1),
\end{align}
and the second gives
\begin{align}
    R_2 - R_1 \leq \beta_2\,(\varepsilon_1 - \varepsilon_2),
\end{align}
equivalently $R_1 - R_2 \geq \beta_2\,(\varepsilon_2-\varepsilon_1)$.
Since $\varepsilon_2-\varepsilon_1\leq 0$ and $\beta_1\geq 0$, the
first inequality gives
$R_1-R_2\leq\beta_1(\varepsilon_2-\varepsilon_1)\leq 0$, so
$R_1\leq R_2$ and $R(\beta)$ is non-decreasing.
\end{proof}

\section{Derivation of the Equivalent Lagrangian}
\label{app:lagrangian_deriv}

Expanding $I(X;W,Y)=I(X;W)+I(X;Y|W)=I(X;Y)+I(X;W|Y)$ in two ways
gives the chain-rule identity
\begin{align}
    I(X;Y|W) + I(X;W) = I(X;W|Y) + I(X;Y).
    \label{eq:cmi_identity}
\end{align}
Since $I(X;Y)$ is constant with respect to the encoder, the
Lagrangian $\mathcal{L}=I(X;W)+\beta\,I(X;Y|W)$ can be rewritten as
\begin{align}
    \mathcal{L}
    &= I(X;W) + \beta\,I(X;Y|W) \notag\\
    &= I(X;W) + \beta\bigl(I(X;W|Y) + I(X;Y) - I(X;W)\bigr) \notag\\
    &= (1-\beta)\,I(X;W) + \beta\,I(X;W|Y) + \beta\,I(X;Y).
    \label{eq:lagrangian_equiv}
\end{align}
Dropping the constant $\beta\,I(X;Y)$ yields the equivalent
Lagrangian \eqref{eq:lagrangian_prime}.

\end{document}